\def\BibTeX{{\rm B\kern-.05em{\sc i\kern-.025em b}\kern-.08em
    T\kern-.1667em\lower.7ex\hbox{E}\kern-.125emX}}
\begin{document}
\title{Towards Semantic Consistency: Dirichlet Energy Driven Robust Multi-Modal Entity Alignment
}

\author{
\IEEEauthorblockN{
Yuanyi Wang\IEEEauthorrefmark{2}$^{\#}$,
Haifeng Sun\IEEEauthorrefmark{2}$^{\#}$ \thanks{\# Equal contribution. },
Jiabo Wang\IEEEauthorrefmark{2},
Jingyu Wang\IEEEauthorrefmark{2}$^{\ast}$ \thanks{* Corresponding author},
Wei Tang\IEEEauthorrefmark{2},
Qi Qi\IEEEauthorrefmark{2},
Shaoling Sun\IEEEauthorrefmark{3}, and
Jianxin Liao\IEEEauthorrefmark{2}
}
\IEEEauthorblockA{\IEEEauthorrefmark{2}State Key Laboratory of Networking and Switching Technology, \\ Beijing University of Posts and Telecommunications, Bejing, China}
\IEEEauthorblockA{\IEEEauthorrefmark{3}China Mobile (Suzhou) Software Technology Co., Ltd., Jiangsu, China}
\IEEEauthorblockA{\{wangyuanyi,hfsun,jiabowang,wangjingyu,tangocean,qiqi8266\}@bupt.edu.cn, \\sunshaoling@cmss.chinamobile.com, jxlbupt@gmail.com}}
\IEEEpeerreviewmaketitle
\maketitle

\begin{abstract}
Multi-Modal Entity Alignment (MMEA) is a pivotal task in Multi-Modal Knowledge Graphs (MMKGs), seeking to identify identical entities by leveraging associated modal attributes. However, real-world MMKGs confront the challenges of semantic inconsistency arising from diverse and incomplete data sources. This inconsistency is predominantly caused by the absence of specific modal attributes, manifesting in two distinct forms: disparities in attribute counts or the absence of certain modalities. Current methods address these issues through attribute interpolation, but their reliance on predefined distributions introduces modality noise, compromising original semantic information. Furthermore, the absence of a generalizable theoretical principle hampers progress towards achieving semantic consistency. In this work, we propose a generalizable theoretical principle by examining semantic consistency from the perspective of Dirichlet energy. Our research reveals that, in the presence of semantic inconsistency, models tend to overfit to modality noise, leading to over-smoothing and performance oscillations or declines, particularly in scenarios with a high rate of missing modality. To overcome these challenges, we propose DESAlign, a robust method addressing the over-smoothing caused by semantic inconsistency and interpolating missing semantics using existing modalities. Specifically, we devise a training strategy for multi-modal knowledge graph learning based on our proposed principle. Then, we introduce a propagation strategy that utilizes existing features to provide interpolation solutions for missing semantic features. DESAlign outperforms existing approaches across 60 benchmark splits, encompassing both monolingual and bilingual scenarios, achieving state-of-the-art performance. Experiments on splits with high missing modal attributes demonstrate its effectiveness, providing a robust MMEA solution to semantic inconsistency in real-world MMKGs.
\end{abstract}

\begin{IEEEkeywords}
Knowledge graph, Multimodal, Entity alignment, Semantic Learning, Semantic consistency, Dirichlet energy
\end{IEEEkeywords}

\section{Introduction}
\label{introduction}

Knowledge graphs (KGs) have emerged as a prominent data structure for representing factual knowledge through RDF triples. Recently, the expansion of KGs has led to the development of Multi-Modal Knowledge Graphs (MMKGs), incorporating diverse information types such as text, images, videos, and more. MMKGs serve as a foundation for cross-modal tasks including information retrieval\cite{b1,b2,b5} and question answering\cite{b3,b4}. Despite their growing significance, MMKGs are hindered by limitations in coverage, restricting their effectiveness in downstream applications. Addressing the integration of heterogeneous MMKGs, Multi-Modal Entity Alignment (MMEA) becomes a pivotal challenge. MMEA aims to identify equivalent entity pairs across different MMKGs by integrating the attribute information across modalities.

MMEA is usually treated as a supervised problem\cite{chen2020mmea,m2,mclea,m4,m5}, relying on extensive seed alignments as supervised signals. These signals guide the learning process of entity embeddings, enabling the projection of MMKGs with varying modal attributes into a unified semantic embedding space. Subsequently, alignment results are predicted by calculating the pairwise similarity between these semantic embeddings. However, MMKGs are typically assembled from diverse and incomplete data sources, leading to semantic inconsistency. This inconsistency arises from the absence of specific modal attributes, leading to disparities in attribute counts or the absence of certain modalities. Illustrated in Figure \ref{Sinconsistency}, the semantics derived from diverse modal attributes will exhibit differences, resulting in significant alignment errors. Specifically, the entity \textit{Elon Musk} in $G_s$ is associated with only one text attribute \textit{Canada}. Conversely, the corresponding entity \textit{Elon Reeve Musk} in $G_t$ exhibits richer attributes, including textual information and images. This discrepancy in attribute counts and modality richness introduces alignment pairs (\textit{Elon Musk} and \textit{Elon Reeve Musk}) with differing semantics. This incompleteness is inherent in real-world KGs like DBpedia\cite{lehmann2015dbpedia} and Freebase\cite{bollacker2008freebase}, where not all entities are associated with modal attributes. For example, in common bilingual datasets such as DBP15K$_{JA-EN}$\cite{m6} and monolingual datasets like FBYG15K\cite{liu2019mmkg}, only 67.58\% and 73.24\% of entities have associated images, respectively.

\begin{figure}[t]
\centering
\includegraphics[width = \linewidth]{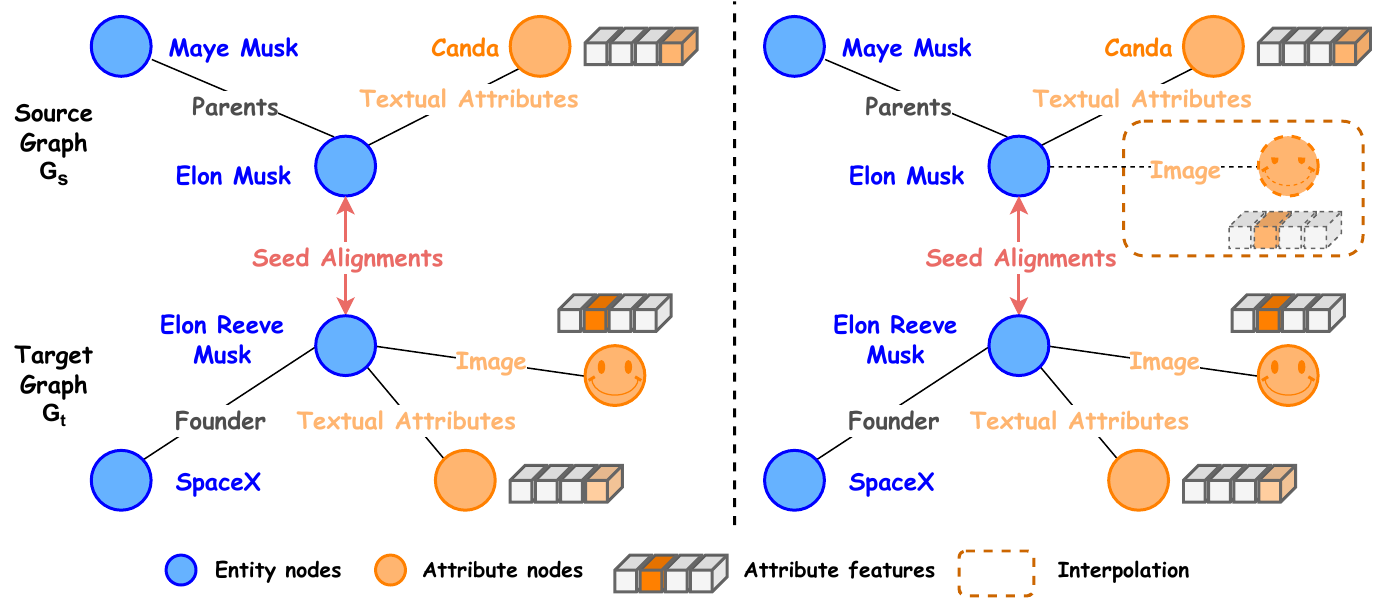}
\caption{An example of semantic inconsistency issue and interpolation process in the MMEA task between MMKG1 and MMKG2.}
\label{Sinconsistency}
\end{figure}

Existing MMEA methods address semantic inconsistency through various strategies. MCLEA\cite{mclea} and MEAformer\cite{meaformer} use predefined distributions to interpolate absent modal attributes and aim to tackle over-smoothing, introducing modal noise. ACK\cite{ack} unifies entities with a common attribute but may lose valuable semantic information. UMAEA\cite{UMAEA} uses a multi-stage approach focused on visual modality. However, none of these methods provide an end-to-end solution. Furthermore, these methods predominantly rely on predefined rules that concentrate on a single modality, ignoring that the semantic inconsistency issue is prevalent across all modalities.

Beyond semantic inconsistency, another challenge is the lack of a generalizable theoretical principle guiding multi-modal knowledge graph learning with semantic consistency. Existing approaches often rely on high-quality predefined distributions to interpolate missing attributes, yielding comparable or inferior results, especially with the high rate of missing modal attributes. Recently, the Dirichlet energy metric\cite{maskey2023fractional} has been introduced to promote homophily within graphs, smoothing graph embeddings. The key concept involves the gradient of Dirichlet energy approaching zero when nodes preserve homophily, driving embeddings to converge. We identify Dirichlet energy as a potential avenue for understanding and guiding semantic consistency in MMEA.

Therefore, our work addresses three critical challenges in achieving semantic consistency within MMEA. Firstly, there is a lack of a generalizable theoretical principle and comprehensive analysis. Second, an effective method for addressing over-smoothing caused by semantic inconsistency across all modalities is needed. Current methodologies often target individual modalities and perform poorly in the high rate of missing modal attributes. Finally, even with an effective framework addressing over-smoothing, the root cause of semantic inconsistency, the lack of certain modal attributes, remains unsolved, leaving essential semantic information missing.

To address these challenges, this research introduces a comprehensive and theoretically grounded framework, DESAlign (\textbf{\underline{D}}irichlet \textbf{\underline{E}}nergy driven \textbf{\underline{S}}emantic-consistent multi-modal entity \textbf{\underline{Align}}ment). Our approach analyzes the issue of semantic inconsistency from the perspective of Dirichlet energy, drawing connections between graph Dirichlet energy and semantic interpolation quality. Then we propose a generalizable theoretical principle inspired by Dirichlet energy to guide multi-modal knowledge graph learning, forming the basis for our \textbf{M}ulti-\textbf{M}odal \textbf{S}emantic \textbf{L}earning (MMSL). This learning strategy is specifically tailored to counter the over-smoothing issue caused by semantic inconsistency. Furthermore, we introduce a propagation strategy for interpolation, \textbf{S}emantic \textbf{P}ropagation (SP), to address the challenge of missing semantics resulting from absent modal attributes. Specifically, we demonstrate that interpolation of missing modal features can be achieved through existing features, employing a discretization scheme to derive \textit{explicit Euler solutions}. We also delve into the theoretical foundations of DESAlign from the perspective of Dirichlet energy, offering valuable insights into its operational mechanisms and the underlying principles of semantic consistency. We propose a new set of MMEA datasets with a total of 60 splits. Our method demonstrates superior performance when compared to state-of-the-art MMEA methods across proposed datasets, including scenarios that encompass both monolingual and bilingual settings, in both weakly supervised and normal supervised conditions. Notably, DESAlign excels not only in basic configurations but also when employed with an iterative strategy, making it highly versatile.

To sum up, our contributions are as follows:
\begin{itemize}
\item \textbf{Generalizable Principle:} We propose a generalizable theoretical principle motivated by Dirichlet energy to guide multi-modal knowledge graph learning, ensuring semantic consistency optimization. 
\item \textbf{Interpolation Solutions:} We propose a semantic propagation strategy demonstrating that missing modal features can be interpolated by existing modalities. We introduce a discretization scheme, deriving explicit Euler solutions, paving the way for continuous methods promoting semantic consistency on graphs.
\item \textbf{Innovative Framework:} The innovative framework, DESAlign, is proposed based on our Dirichlet energy-driven principles. Specifically designed for MMEA, DESAlign addresses over-smoothing issues in the training process and interpolates missing semantic information.
\item \textbf{Extensive Experiments:} Conducting comprehensive experiments on real-world datasets across monolingual and bilingual contexts, under both normal and weakly supervised settings. DESAlign consistently achieves state-of-the-art performance. Explorations on extensive benchmarks with the high ratio of missing modal attributes establish DESAlign's superiority over existing methods, highlighting its robustness against challenges associated with missing modal attributes and diverse modalities. 
\footnote{The code and data are available at https://github.com/wyy-code/DESAlign.}
\end{itemize}

The paper is organized as follows: Section \ref{Preliminary} covers preliminaries of MMEA and Dirichlet energy. Section \ref{Dirichletanalysis} discusses semantic inconsistency in MMEA and introduces our principle. Section \ref{Methodology} details our DESAlign framework. Experimental findings are reported in Section \ref{MAINExperiments}. Sections \ref{relatedwork} and \ref{concluision} review related work and conclude the study, respectively.

\section{Preliminary}
\label{Preliminary}

Let $G = (\mathcal{E},\mathcal{R},\mathcal{A}, \mathcal{V})$ be a MMKG with an adjacency matrix $\mathbf{A}$. Here, $\mathcal{E}$, $\mathcal{R}$, $\mathcal{A}$, and $\mathcal{V}$ correspond to entities, relations, textual attributes, and images, respectively. Assuming $G$ as an undirected graph, the graph Laplacian matrix, denoted as $\mathbf{\Delta}$, is positive semi-definite, given by $\mathbf{\Delta} = \mathbf{I} - \mathbf{\widetilde{A}}$, where $\mathbf{\widetilde{A}} = \mathbf{D}^{-\frac{1}{2}}\mathbf{A}\mathbf{D}^{-\frac{1}{2}}$ is the normalized adjacency matrix, and $\mathbf{D} = \text{diag}(\sum_ja_{1j},\dots,\sum_ja_{nj})$ is the diagonal degree matrix.

\subsection{Problem Definition}
This paper addresses the challenge of aligning two MMKGs with semantic inconsistency issues, denoted as source ($G_s$) and target ($G_t$) graphs. MMEA aims to establish a one-to-one mapping $\Phi = \{(e_i, e_i^\prime) | e_i \in \mathcal{E}_s, e_i^\prime \in \mathcal{E}_t, e_i \equiv e_i^\prime\}$ using seed alignments $\Phi^\prime \in \Phi$, where $\equiv$ represents an equivalence relation between $e_i$ and $e_i^\prime$.

\theoremstyle{plain}
\newtheorem{myDef}{\noindent\bf Definition}
\begin{myDef}
(Multi-Modal Entity Alignment) MMEA seeks a one-to-one mapping $\Phi$ from the source graph $G_s = (\mathcal{E}_s, \mathcal{R}_s, \mathcal{A}_s, \mathcal{V}_s)$ to the target graph $G_t = (\mathcal{E}_t, \mathcal{R}_t, \mathcal{A}_t, \mathcal{V}_t)$, relying on seed alignments $\Phi^\prime \in \Phi$. This mapping enables the learning of entity representations $\mathbf{X}$ in the unified semantic space for pairwise similarity assessment.
\end{myDef}
The semantic embeddings evolve through a multi-layer semantic encoder, with the transformation from initial features $\mathbf{X}^{(0)}$ to final entity semantic embedding $\mathbf{X}^{(k)}$ described by:
\begin{equation}
    \mathbf{X}^{(k)}=\sigma (W^{(k)}\mathbf{X}^{(k-1)})
\end{equation}
where $\mathbf{X}$ is a $d$-dimensional embedding, $W^{(k)} \in \mathbb{R}^{d \times d}$ denotes trainable weights for feature transformation, and $\sigma$ represents an activation function.

\subsection{Semantic Consistency by Interpolation}
For simplicity, let $\mathbf{x} \in \mathbb{R}^{n}$ represent scalar features, with $n$ denoting the number of entities in graphs $G_s$ and $G_t$. As discussed in Section \ref{introduction}, we define the set of entities exhibiting \textit{semantic consistent} as $\mathcal{E}_c \subseteq \mathcal{E}$, while the remaining \textit{inconsistent} entities are split into $\mathcal{E}_o = \mathcal{E} \backslash \mathcal{E}_c$. These inconsistent entities encompass two subsets: $\mathcal{E}_{o1}$ with differing attribute counts, and $\mathcal{E}_{o2}$ with missing modal attributes. Assuming entity ordering, we represent the semantic feature matrix as:
\begin{equation}
\label{SCA}
\mathbf{x} = 
\left [ 
    \begin{array}{lc}
        \mathbf{x}_c \\
        \mathbf{x}_{o1}\\
        \mathbf{x}_{o2}\\
    \end{array}
\right ],
\mathbf{A} = 
\left [ 
    \begin{array}{lcc}
        \mathbf{A}_{cc} & \mathbf{A}_{co1} & \mathbf{A}_{co2}\\
        \mathbf{A}_{o1c} & \mathbf{A}_{o1o1} & \mathbf{A}_{o1o2}\\
        \mathbf{A}_{o2c} & \mathbf{A}_{o2o1} & \mathbf{A}_{o2o2}\\
    \end{array}
\right ]
\end{equation}
Given the undirected nature of the graph, we have $\mathbf{A}$ symmetric, implying $\mathbf{A}^\top_{co1} = \mathbf{A}_{o1c}$, $\mathbf{A}^\top_{co2} = \mathbf{A}_{o2c}$, and $\mathbf{A}^\top_{o1o2} = \mathbf{A}_{o2o1}$. The Laplacian matrix $\mathbf{\Delta}$ shares these properties. The subsequent discussion assumes these symmetries. We now present the definition of feature interpolation.
\begin{myDef}
(Feature Interpolation) Given the semantic features $\mathbf{X}$ and graph structure $\mathbf{A}$, the interpolation function $f$ modifies inconsistent semantic features as $\mathbf{\widehat{X}} = f(\mathbf{X}, \mathbf{A})$.
\end{myDef}
Unlike existing methods that rely on predefined rules to interpolate initial features $\mathbf{X}^{(0)}$, our approach innovatively interpolates final semantic features $\mathbf{X}^{(k)}$ through gradient flow-induced information propagation within the graph \cite{wang2024gradient}.

\subsection{Dirichlet Energy for Graphs}
Graph Dirichlet energy (DE), measuring node smoothness within a graph \cite{zhou2021dirichlet}, acts as a pivotal energy function that encapsulates prior knowledge of graph feature behavior, particularly in imputation tasks \cite{zhang2023missing}. This function, premised on graph homophily, employs DE to quantify feature smoothness. 

\begin{myDef}
    (Dirichlet Energy) Given the entity embedding $\mathbf{X}\in \mathbb{R}^{N\times d}$, the corresponding Dirichlet energy is defined by:
    \begin{equation}
        \mathscr{L}(\mathbf{X})=tr(\mathbf{X}^\top \Delta \mathbf{X}) = \frac{1}{2}\sum_{i,j=1}^Na_{i,j}||\frac{\mathbf{X}_i}{\sqrt{\mathbf{D}_{i,i}+1}} - \frac{\mathbf{X}_j}{\sqrt{\mathbf{D}_{j,j}+1}}||^2_2
    \end{equation}
\end{myDef}
Dirichlet Energy, with its solid theoretical foundation and spectral properties \cite{spielman2012spectral}, facilitates an in-depth embedding distribution analysis by leveraging its mathematically local characteristics and constraints (like Lipschitz conditions), distinguishing it from other metrics such as H@k and MRR.

\section{Dirichlet Energy for Semantic Learning}
\label{Dirichletanalysis}
In this section, we delve into the profound connection between semantics and Dirichlet energy within the context of MMEA. We illustrate that semantic inconsistency, viewed through the perspective of Dirichlet energy, causes the over-smoothing issue. Subsequently, we introduce a generalizable theoretical principle for semantic learning.

\subsection{Semantic Quality}
Semantic inconsistency, a common hurdle in MMEA, occurs due to absent modal attributes, complicating the assessment of multi-modal knowledge graph representation and interpolation methods. To tackle this, we advocate for evaluating semantic quality through Dirichlet energy, establishing its relevance to semantic embedding's integrity. The subsequent proposition delineates the relationship between Dirichlet energy and semantic embedding quality:
\theoremstyle{plain}
\newtheorem{myTh}{\bf Proposition}
\begin{myTh}
Let $\langle \cdot \rangle$ denote the inner product. Dirichlet energy can reflect the local quality of semantic interpolation between the modified feature matrix $\mathbf{\widehat{X}}$ and the original $\mathbf{X}$ through the following equation:
    \begin{equation}
        \mathscr{L}(\mathbf{\widehat{X}}) - \mathscr{L}(\mathbf{X}) \ge 2\langle \mathbf{\Delta}\mathbf{X}, (\mathbf{\widehat{X}}-\mathbf{X}) \rangle
    \end{equation}
\end{myTh}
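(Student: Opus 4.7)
The plan is to treat this as a discrete analog of the convexity/tangent inequality for the quadratic form $\mathscr{L}(\mathbf{X}) = \mathrm{tr}(\mathbf{X}^\top \mathbf{\Delta} \mathbf{X})$, exploiting the positive semi-definiteness of the Laplacian $\mathbf{\Delta}$ already stated in the preliminaries.

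First, I would introduce the displacement $\mathbf{Y} = \mathbf{\widehat{X}} - \mathbf{X}$, so that $\mathbf{\widehat{X}} = \mathbf{X} + \mathbf{Y}$, and expand $\mathscr{L}(\mathbf{\widehat{X}}) = \mathrm{tr}\bigl((\mathbf{X}+\mathbf{Y})^\top \mathbf{\Delta} (\mathbf{X}+\mathbf{Y})\bigr)$ into its four trace terms. Using symmetry of $\mathbf{\Delta}$ (since $\mathbf{\Delta} = \mathbf{I} - \mathbf{\widetilde{A}}$ with $\mathbf{\widetilde{A}}$ symmetric), the two cross terms $\mathrm{tr}(\mathbf{X}^\top \mathbf{\Delta} \mathbf{Y})$ and $\mathrm{tr}(\mathbf{Y}^\top \mathbf{\Delta} \mathbf{X})$ coincide, collapsing the expansion to
$$\mathscr{L}(\mathbf{\widehat{X}}) = \mathscr{L}(\mathbf{X}) + 2\,\mathrm{tr}(\mathbf{X}^\top \mathbf{\Delta} \mathbf{Y}) + \mathscr{L}(\mathbf{Y}).$$

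Next I would re-express the cross term as a Frobenius inner product, $\mathrm{tr}(\mathbf{X}^\top \mathbf{\Delta} \mathbf{Y}) = \mathrm{tr}\bigl((\mathbf{\Delta} \mathbf{X})^\top \mathbf{Y}\bigr) = \langle \mathbf{\Delta} \mathbf{X}, \mathbf{\widehat{X}} - \mathbf{X} \rangle$, once more invoking symmetry of $\mathbf{\Delta}$ to move it across the pairing. This produces the exact identity
$$\mathscr{L}(\mathbf{\widehat{X}}) - \mathscr{L}(\mathbf{X}) = 2\langle \mathbf{\Delta}\mathbf{X}, \mathbf{\widehat{X}} - \mathbf{X} \rangle + \mathscr{L}(\mathbf{\widehat{X}} - \mathbf{X}),$$
and since positive semi-definiteness of $\mathbf{\Delta}$ forces the residual $\mathscr{L}(\mathbf{Y}) = \mathrm{tr}(\mathbf{Y}^\top \mathbf{\Delta} \mathbf{Y}) \ge 0$, dropping this non-negative term yields the stated lower bound.

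Honestly, there is no real obstacle here: the proposition is just the tangent-line inequality $g(\mathbf{\widehat{X}}) \ge g(\mathbf{X}) + \langle \nabla g(\mathbf{X}), \mathbf{\widehat{X}} - \mathbf{X} \rangle$ for the convex quadratic $g = \mathscr{L}$ whose gradient is $2\mathbf{\Delta}\mathbf{X}$. The only point I would be careful to make explicit at the outset is the convention $\langle \mathbf{A}, \mathbf{B} \rangle = \mathrm{tr}(\mathbf{A}^\top \mathbf{B})$ for the matrix inner product, because the whole manipulation hinges on sliding $\mathbf{\Delta}$ across this pairing via its symmetry, and an ambiguous convention would obscure the single substantive step in the argument.
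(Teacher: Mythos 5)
Your proposal is correct and is essentially the paper's own argument: the paper proves this in one line by invoking convexity of $\mathscr{L}$ (from positive semi-definiteness of $\mathbf{\Delta}$) together with the first-order tangent inequality using $\nabla\mathscr{L}(\mathbf{X}) = 2\mathbf{\Delta}\mathbf{X}$, which is exactly the inequality you recover by expanding the quadratic and dropping the non-negative remainder $\mathscr{L}(\mathbf{\widehat{X}}-\mathbf{X})$. Your version is just a more explicit, self-contained verification of the same fact, with the added (minor) benefit of identifying the exact gap in the inequality.
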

\begin{proof}
    As the graph Laplacian matrix is positive semi-definite, the Dirichlet energy $\mathscr{L}$ is convex. According to the properties of convex functions and the \textit{Taylor equation}, we have:
    \begin{spacing}{0.6}
     \begin{equation*}
     \begin{aligned}
         \mathscr{L}&(\mathbf{\widehat{X}}) \ge  \mathscr{L}(\mathbf{X}) + \langle \nabla \mathscr{L}(\mathbf{X}), (\mathbf{\widehat{X}}-\mathbf{X}) \rangle \\
         \mathscr{L}&(\mathbf{\widehat{X}}) - \mathscr{L}(\mathbf{X}) \ge 2\langle \mathbf{\Delta}\mathbf{X}, (\mathbf{\widehat{X}}-\mathbf{X}) \rangle    
     \end{aligned}
     \end{equation*}
     \end{spacing}
\end{proof}
This establishes Dirichlet energy as a metric for local semantic interpolation quality. The operator $\mathbf{\Delta}\mathbf{X}$ acts as a graph filter, with the inner product providing a measure of how well the enhanced features in $\mathbf{\widehat{X}}$ conform to the graph inherent structure, as dictated by $\mathbf{\Delta}\mathbf{X}$. Thus, a higher inner product implies that $\mathbf{\widehat{X}}$ modifications are in harmony with the spectral properties of the graph, ensuring that interpolations preserve and reflect the graph's semantic nuances effectively.
Based on \cite{zhang2023missing} and our analysis, Dirichlet energy can be utilized to simultaneously constrain the upper and lower bounds on the quality of interpolation. Note that the eigenvalues of $\mathbf{\Delta}$ vary with real-world graphs and lie within the range $[0, 2)$ \cite{chung1997spectral}.
\theoremstyle{plain}
\newtheorem{myCo}{\bf Corollary}
\begin{myCo}
\label{corollary1}
    Let $\lambda_{max}$ be the largest eigenvalue of $\mathbf{\Delta}$, $\bold M$ be the maximum of $l_2$ norms $||\mathbf{\widehat{X}}||_2, ||\mathbf{X}||_2$, and $\bold m$ be the minimum. The optimal interpolation quality is limited by the Dirichlet energy gap as follows:
    \begin{equation}
    \label{Corollary}
        \frac{|\mathscr{L}(\mathbf{\widehat{X}}) - \mathscr{L}(\mathbf{X})|}{2\lambda_{max}\bold m} \ge ||\mathbf{\widehat{X}} - \mathbf{X}||_2 \ge \frac{|\mathscr{L}(\mathbf{\widehat{X}}) - \mathscr{L}(\mathbf{X})|}{2\lambda_{max}\bold M}
    \end{equation}    
\end{myCo}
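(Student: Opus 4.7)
My plan is to build on Proposition~1 by producing matching upper and lower bounds on the energy gap $|\mathscr{L}(\mathbf{\widehat{X}}) - \mathscr{L}(\mathbf{X})|$ in terms of $\|\mathbf{\widehat{X}} - \mathbf{X}\|_2$, and then invert those bounds. The starting point is the algebraic identity
\begin{equation*}
\mathscr{L}(\mathbf{\widehat{X}}) - \mathscr{L}(\mathbf{X}) = \operatorname{tr}\bigl((\mathbf{\widehat{X}}-\mathbf{X})^\top \mathbf{\Delta} (\mathbf{\widehat{X}}+\mathbf{X})\bigr),
\end{equation*}
obtained by expanding $\operatorname{tr}(\mathbf{\widehat{X}}^\top \mathbf{\Delta} \mathbf{\widehat{X}}) - \operatorname{tr}(\mathbf{X}^\top \mathbf{\Delta} \mathbf{X})$ and collapsing the two cross-terms using the symmetry of $\mathbf{\Delta}$. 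This symmetric factorization is cleaner than iterating the one-sided tangent bound from Proposition~1 and it isolates $\mathbf{\widehat{X}} - \mathbf{X}$ as a factor on one side, which is exactly what I need to produce a norm of the difference on both ends of the inequality.

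For the right-hand inequality, $\|\mathbf{\widehat{X}} - \mathbf{X}\|_2 \ge |\mathscr{L}(\mathbf{\widehat{X}}) - \mathscr{L}(\mathbf{X})|/(2\lambda_{\max}\mathbf{M})$, I would apply the trace/Cauchy--Schwarz inequality $|\operatorname{tr}(\mathbf{A}^\top \mathbf{B})| \le \|\mathbf{A}\|_2 \|\mathbf{B}\|_2$ to the identity above, then use the spectral bound $\|\mathbf{\Delta}\mathbf{Z}\|_2 \le \lambda_{\max}\|\mathbf{Z}\|_2$ together with the triangle inequality $\|\mathbf{\widehat{X}}+\mathbf{X}\|_2 \le \|\mathbf{\widehat{X}}\|_2 + \|\mathbf{X}\|_2 \le 2\mathbf{M}$. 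Rearranging yields the claimed bound directly; this direction is routine once the identity is in place.

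The main obstacle is the left-hand inequality, which requires a \emph{reverse} control of $|\mathscr{L}(\mathbf{\widehat{X}}) - \mathscr{L}(\mathbf{X})|$ from below by a multiple of $\|\mathbf{\widehat{X}} - \mathbf{X}\|_2$. Cauchy--Schwarz is one-sided, so to reach $2\lambda_{\max}\mathbf{m}\cdot\|\mathbf{\widehat{X}} - \mathbf{X}\|_2$ I would invoke a Kantorovich/P\'olya--Szeg\H{o}-type reverse Cauchy--Schwarz estimate that exploits the uniform norm bounds $\mathbf{m} \le \|\cdot\|_2 \le \mathbf{M}$ on both $\mathbf{X}$ and $\mathbf{\widehat{X}}$. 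Concretely, since $\mathbf{m}$ bounds the $\ell_2$ norms from below, the operator $\mathbf{\Delta}(\mathbf{\widehat{X}}+\mathbf{X})$ cannot be arbitrarily aligned away from $\mathbf{\widehat{X}}-\mathbf{X}$, and the trace on the right of the identity admits a lower envelope of the form $2\lambda_{\max}\mathbf{m}\cdot\|\mathbf{\widehat{X}}-\mathbf{X}\|_2$. Alternatively, one may invoke the hypothesis implicitly (as the corollary's wording ``optimal interpolation quality is limited by'' suggests) that the interpolation is nondegenerate in the dominant spectral direction of $\mathbf{\Delta}$, so that $\|\mathbf{\Delta}(\mathbf{\widehat{X}}-\mathbf{X})\|_2$ attains $\lambda_{\max}\|\mathbf{\widehat{X}}-\mathbf{X}\|_2$ up to a factor controlled by $\mathbf{m}/\mathbf{M}$.

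Finally, I would combine the two directions, take absolute values to absorb the sign of $\mathscr{L}(\mathbf{\widehat{X}}) - \mathscr{L}(\mathbf{X})$, and invert to sandwich $\|\mathbf{\widehat{X}}-\mathbf{X}\|_2$ between $|\mathscr{L}(\mathbf{\widehat{X}}) - \mathscr{L}(\mathbf{X})|/(2\lambda_{\max}\mathbf{M})$ and $|\mathscr{L}(\mathbf{\widehat{X}}) - \mathscr{L}(\mathbf{X})|/(2\lambda_{\max}\mathbf{m})$, giving the displayed bounds. The forward direction is a short calculation; the reverse direction, where the non-trivial geometric content of the corollary resides, is the step I would scrutinize most carefully.
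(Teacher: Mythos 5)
Your right-hand inequality is sound and, modulo packaging, coincides with the paper's: the paper bounds the difference quotient by $\sup_{\|\mathbf{X}\|_2\le \mathbf{M}}\|\nabla\mathscr{L}(\mathbf{X})\|_2=\sup\|2\mathbf{\Delta}\mathbf{X}\|_2\le 2\lambda_{max}\mathbf{M}$ via the local Lipschitz property, which yields exactly the constant your identity-plus-Cauchy--Schwarz argument produces. The genuine gap is the left-hand inequality, and you are right to flag it: your proposal never actually supplies the reverse bound, only the hope that a Kantorovich-type estimate or an unstated nondegeneracy hypothesis will deliver $2\lambda_{max}\mathbf{m}\,\|\mathbf{\widehat{X}}-\mathbf{X}\|_2$ as a lower envelope for the trace. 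No such estimate can work, because the inequality itself fails for perturbations in the kernel of $\mathbf{\Delta}$: if $\mathbf{v}$ satisfies $\mathbf{\Delta}\mathbf{v}=0$ (such a $\mathbf{v}$ exists, since the graph Laplacian always has eigenvalue $0$) and $\mathbf{\widehat{X}}=\mathbf{X}+\mathbf{v}$, then your own identity gives $\mathscr{L}(\mathbf{\widehat{X}})-\mathscr{L}(\mathbf{X})=\operatorname{tr}\bigl(\mathbf{v}^\top\mathbf{\Delta}(2\mathbf{X}+\mathbf{v})\bigr)=0$ while $\|\mathbf{\widehat{X}}-\mathbf{X}\|_2=\|\mathbf{v}\|_2>0$, so the claimed bound $|\mathscr{L}(\mathbf{\widehat{X}})-\mathscr{L}(\mathbf{X})|/(2\lambda_{max}\mathbf{m})\ge\|\mathbf{\widehat{X}}-\mathbf{X}\|_2$ is violated for any $\mathbf{m}>0$.

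For what it is worth, the paper's own proof of that direction does not close the gap either: it mirrors the Lipschitz argument with an infimum, asserting $|\mathscr{L}(\mathbf{X}_a)-\mathscr{L}(\mathbf{X}_b)|/\|\mathbf{X}_a-\mathbf{X}_b\|_2\ge\inf_{\|\mathbf{X}\|_2\ge\mathbf{m}}\|2\mathbf{\Delta}\mathbf{X}\|_2\ge 2\lambda_{max}\mathbf{m}$. Both steps are problematic: a difference quotient is not bounded below by the infimum of the gradient norm (the gradient can be orthogonal to the displacement), and $\inf_{\|\mathbf{X}\|_2\ge\mathbf{m}}\|\mathbf{\Delta}\mathbf{X}\|_2$ is governed by the smallest singular value of $\mathbf{\Delta}$, which is $0$, not by $\lambda_{max}$. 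So your instinct that the left-hand inequality is where the real content lies is correct; the honest conclusion is that only the right-hand inequality of the corollary is provable as stated, and any lower bound of this type would have to exclude displacement components lying in the low-frequency eigenspaces of $\mathbf{\Delta}$.
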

\begin{proof}
    Let $\mathbf{X}_a$ and $\mathbf{X}_b$ be two bounded feature matrices, and $\bold M, \bold m$ be the $l_2$ boundary. Considering $\mathscr{L}(\mathbf{X})$ as a local Lipschitz function when $\mathbf{X}$ is bounded, we have:
    \begin{spacing}{0.2}
    \begin{equation*}
    \begin{aligned}
        \frac{|\mathscr{L}(\mathbf{X}_a) - \mathscr{L}(\mathbf{X}_b)|}{||\mathbf{X}_a - \mathbf{X}_b||_2} &\le \sup_{||\mathbf{X}||_2 \le \bold M}|| \nabla \mathscr{L}(\mathbf{X})^\top ||_2\\
        & =\sup_{||\mathbf{X}||_2 \le \bold M}|| 2\mathbf{\Delta}\mathbf{X} ||_2 \le 2\lambda_{max} \bold M \\
        \frac{|\mathscr{L}(\mathbf{X}_a) - \mathscr{L}(\mathbf{X}_b)|}{||\mathbf{X}_a - \mathbf{X}_b||_2} &\ge \inf_{||\mathbf{X}||_2 \ge \bold m}|| \nabla \mathscr{L}(\mathbf{X})^\top ||_2\\
        & =\inf_{||\mathbf{X}||_2 \ge \bold m}|| 2\mathbf{\Delta}\mathbf{X} ||_2 \ge 2\lambda_{max} \bold m 
    \end{aligned}        
    \end{equation*}
    \end{spacing}
\end{proof}
Existing metrics such as C@k \cite{huang2022semantic} tackle structural inconsistencies, and downstream metrics like MRR assess semantic variations. However, they fall short of directly evaluating semantic interpolation. Our work bridges this gap by linking Dirichlet energy to semantic quality, defining clear bounds for interpolation (Corollary \ref{corollary1}). 

Additionally, our analysis from the Dirichlet energy perspective innovates an efficient interpolation method detailed in Section \ref{Semantic Propagation}. We elucidate the interpolation process through Dirichlet energy's gradient flow, with its discretization providing a straightforward yet potent approach.

\subsection{Semantic Consistent Learning}
Dirichlet energy plays a crucial role in learning semantic representations. We introduce a robust principle to enhance the ability of semantic encoder to capture high-level semantic information, as detailed in Proposition \ref{pro2}. This principle sets lower and upper bounds for semantic representation quality. The semantic learning process is theoretically
simplified by excluding the non-linear activation function $\sigma$, representing the semantic encoder as $\mathbf{X}^{(k)}=W^{(k)}W^{(k-1)}\dots W^{(1)}\mathbf{X}$.
\begin{myTh}
\label{pro2}
    The Dirichlet energy of the semantic feature at the $k$-th layer is bounded as follows:
    \begin{equation}
        p_{min}^{(k)}\mathscr{L}(\mathbf{X}^{(k-1)}) \le \mathscr{L}(\mathbf{X}^{(k)}) \le p_{max}^{(k)}\mathscr{L}(\mathbf{X}^{(k-1)})
    \end{equation}
    where $p_{min}^{(k)}$ and $p_{max}^{(k)}$ are the squares of minimum and maximum singular values of weight $W^{(k)}$, respectively.
\end{myTh}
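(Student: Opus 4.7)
The plan is to exploit the fact that, once the non-linearity $\sigma$ is removed as indicated in the statement, $\mathscr{L}(\mathbf{X}^{(k)})$ becomes a quadratic form in $W^{(k)}$ that naturally factors into the trace of a product of two positive semi-definite matrices, at which point the result reduces to a standard PSD trace inequality.

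First I would expand the Dirichlet energy at layer $k$ using the linearized recurrence $\mathbf{X}^{(k)} = W^{(k)} \mathbf{X}^{(k-1)}$ and substitute into $\mathscr{L}(\mathbf{X}^{(k)}) = tr((\mathbf{X}^{(k)})^\top \mathbf{\Delta} \mathbf{X}^{(k)})$. This yields a sandwiched form whose central block is $M := (\mathbf{X}^{(k-1)})^\top \mathbf{\Delta} \mathbf{X}^{(k-1)}$; using the cyclic property of trace, $\mathscr{L}(\mathbf{X}^{(k)})$ rewrites as $tr(M B)$ with $B := W^{(k)}(W^{(k)})^\top$ (or $(W^{(k)})^\top W^{(k)}$, depending on the left/right multiplication convention for $W^{(k)}$). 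The key observations are that (i) $M$ is PSD since $\mathbf{\Delta}$ is PSD (already used in Proposition 1), and $tr(M) = \mathscr{L}(\mathbf{X}^{(k-1)})$; (ii) $B$ is PSD and its spectrum is exactly $\{\sigma_i^2(W^{(k)})\}$, so that $\lambda_{\min}(B) = p_{\min}^{(k)}$ and $\lambda_{\max}(B) = p_{\max}^{(k)}$.

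Next I would invoke the elementary fact that, for PSD matrices $A,B$, $\lambda_{\min}(B)\, tr(A) \le tr(AB) \le \lambda_{\max}(B)\, tr(A)$. This follows by diagonalizing $A = \sum_i \mu_i u_i u_i^\top$ with $\mu_i \ge 0$, so that $tr(AB) = \sum_i \mu_i\, u_i^\top B u_i$, and bounding each Rayleigh quotient $u_i^\top B u_i$ between the extreme eigenvalues of $B$. Specializing $A = M$ and $B = W^{(k)}(W^{(k)})^\top$ yields exactly $p_{\min}^{(k)} \mathscr{L}(\mathbf{X}^{(k-1)}) \le \mathscr{L}(\mathbf{X}^{(k)}) \le p_{\max}^{(k)} \mathscr{L}(\mathbf{X}^{(k-1)})$.

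The main obstacle is really just notational bookkeeping: one must commit to a consistent dimensional convention for $\mathbf{X}^{(k)}$ versus $W^{(k)}$ (the paper writes $W^{(k)} \mathbf{X}^{(k-1)}$ while also stating $W^{(k)} \in \mathbb{R}^{d\times d}$, which forces an implicit transpose), and then verify that trace cyclicity does isolate $W^{(k)}$ into a single PSD quadratic factor whose spectrum coincides with the squared singular values. Either convention leads to the same sandwich structure, and no machinery beyond the PSD-ness of $\mathbf{\Delta}$ and the PSD trace inequality is needed; the simplification that drops $\sigma$ is essential, since a nonlinear activation would destroy the homogeneity of degree two that makes this argument work.
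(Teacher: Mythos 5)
Your proposal is correct and follows essentially the same route as the paper: linearize the encoder, use trace cyclicity to isolate $W^{(k)}$ into a single PSD factor $B$ multiplying the PSD matrix $M=(\mathbf{X}^{(k-1)})^\top\mathbf{\Delta}\mathbf{X}^{(k-1)}$, and apply the trace inequality $\lambda_{\min}(B)\,tr(M)\le tr(MB)\le\lambda_{\max}(B)\,tr(M)$. You are in fact slightly more careful than the paper, which asserts the trace bound with only the remark that $tr(M)\ge 0$ and silently uses a left/right multiplication convention that makes its cyclicity step typecheck; your Rayleigh-quotient justification and your observation that $W W^\top$ and $W^\top W$ share the same nonzero spectrum fill those gaps.
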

\begin{proof}
   The upper bound is derived as follows:
    \begin{equation*}
    \begin{aligned}
        \mathscr{L}(\mathbf{X}^{(k)}) & = \mathscr{L}(W^{(k)}\mathbf{X}^{(k-1)}) \\ & = tr((W^{(k)}\mathbf{X}^{(k-1)})^\top\mathbf{\Delta}(W^{(k)}\mathbf{X}^{(k-1)})) \\ & = tr((\mathbf{X}^{(k-1)})^\top\mathbf{\Delta}\mathbf{X}^{(k-1)}(W^{(k)})^\top W^{(k)}) \\
        & \le tr((\mathbf{X}^{(k-1)})^\top\mathbf{\Delta}\mathbf{X}^{(k-1)})\psi_{max}((W^{(k)})^\top W^{(k)}) \\ & = \mathscr{L}(\mathbf{X}^{(k-1)})p_{max}^{(k)}
    \end{aligned}        
    \end{equation*}
    $\psi_{max}$ denotes the maximum eigenvalue of a matrix. Since $ tr((\mathbf{X}^{(k-1)})^\top\mathbf{\Delta}\mathbf{X}^{(k-1)}) \ge 0$, we can obtain the above inequality relationship. Similarly, the lower bound is derived as:
    \begin{spacing}{0.3}
     \begin{equation*}
    \begin{aligned}
        \mathscr{L}(\mathbf{X}^{(k)}) & = tr((\mathbf{X}^{(k-1)})^\top\mathbf{\Delta}\mathbf{X}^{(k-1)}(W^{(k)})^\top W^{(k)}) \\
        & \ge tr((\mathbf{X}^{(k-1)})^\top\mathbf{\Delta}\mathbf{X}^{(k-1)})\sigma_{min}((W^{(k)})^\top W^{(k)}) \\ & = \mathscr{L}(\mathbf{X}^{(k-1)})p_{min}^{(k)}
    \end{aligned}        
    \qedhere
    \end{equation*}
    \end{spacing}
\end{proof}
This proposition asserts that without proper design and training on the weight $W^{(k)}$, the Dirichlet energy can become either too small or too large. Based on the common Glorot initialization \cite{glorot2010understanding} and $l_2$ regularization, we empirically observe that some weight matrices of semantically inconsistent features tend to approximate zero in higher feedforward layers. Consequently, the corresponding square singular values are close to zero in these intermediate layers. This phenomenon results in the Dirichlet energy becoming zero at higher layers of the semantic encoder, leading to over-smoothing. In the context of cross-modal retrieval systems \cite{chen2019cross}, semantic inconsistency is often considered analogous to the over-smoothing issue.

\section{Methodology}
\label{Methodology}
We present DESAlign, a unified framework for robust semantic learning in MMEA that ensures semantic consistency. As discussed in Section \ref{introduction}, semantic inconsistency is prevalent in MMEA, arising from missing modal features. Existing methods, primarily addressing single modality, struggle when faced with a high rate of missing modal attributes.

To address these challenges, we delve into the semantic consistency issue from the perspective of Dirichlet energy, identifying its role in causing over-smoothing during the learning of entity representations. Motivated by this, our method aims to (a) acquire a robust semantic representation of the MMKGs and (b) perform effective semantic interpolation to achieve semantic consistency. We construct a multi-modal knowledge entity representation, encompassing graph structure, relations, images, and text, to model diverse modalities within the MMKG. We then develop robust semantic learning to counteract the over-smoothing issue, mitigating semantic inconsistency during the training process. Furthermore, we design a novel interpolation method that achieves semantic consistency by controlling the gradient flow of Dirichlet energy. Figure \ref{DESAlign} illustrates the DESAlign framework.

\begin{figure*}[t]
\centering
\includegraphics[width = \textwidth]{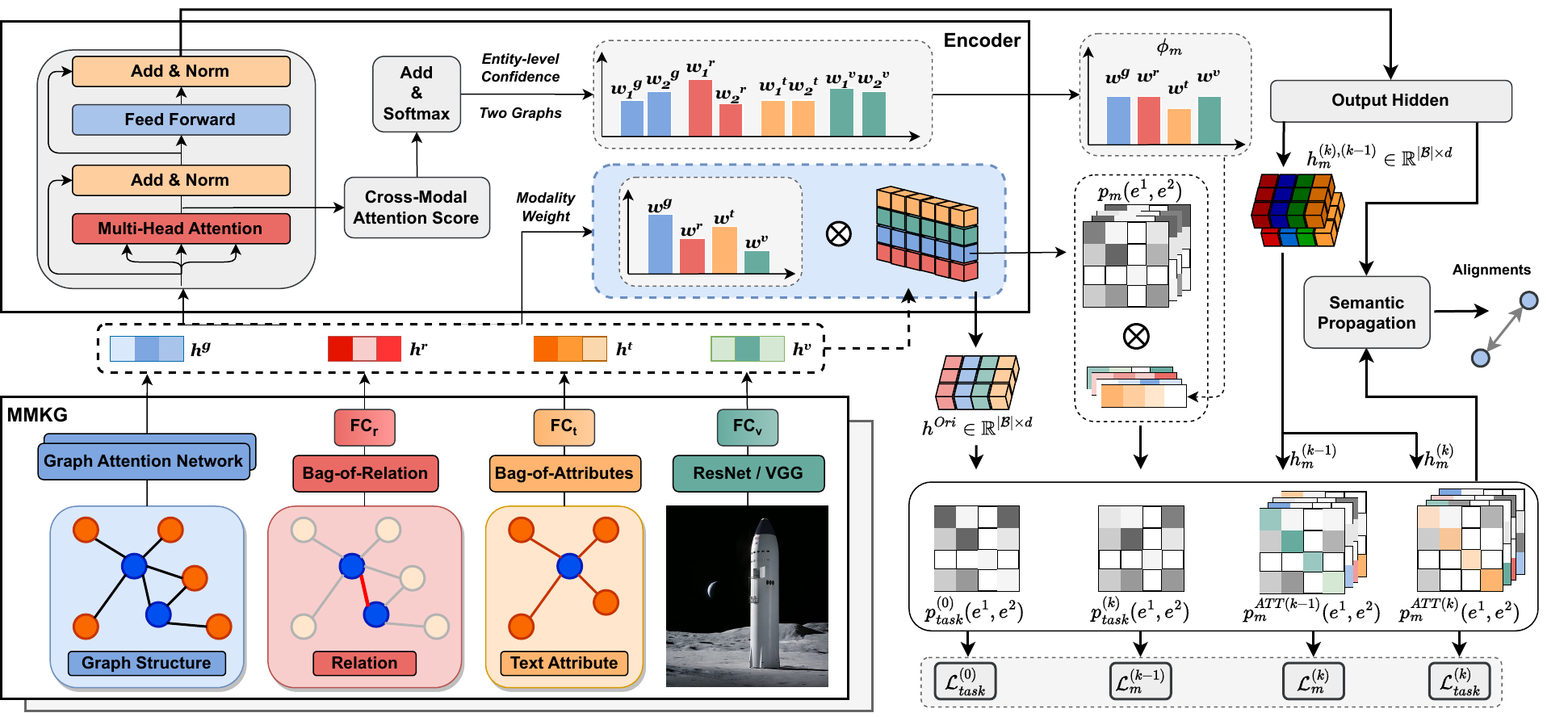}
\caption{The framework of DESAlign.} 
\vspace{-5mm}
\label{DESAlign}
\end{figure*}

\subsection{Multi-modal Knowledge Graph Representation}
This section details how we embed each modality of an entity into a low-dimensional vector within the given MMKGs.

\noindent
\textbf{(1) Graph Structure Embedding: }
Let $x^g_i \in \mathbb{R}^d$ represent the randomly initialized graph embedding of entity $e_i$, where $d$ is the predetermined hidden dimension. We utilize the Graph Attention Network (GAT)\cite{velivckovic2018graph} with two attention heads and two layers to capture the structural information of $G$, equipped with a diagonal weight matrix \cite{yang2015embedding} $W_g \in \mathbb{R}^{d\times d}$ for linear transformation. We have structure embedding as 
\begin{equation}
    h^g_i = GAT(W_g, \mathbf{A}; x^g_i)
\end{equation}

\noindent
\textbf{(2) Relation, Text Attribute, and Visual Embedding: }
To mitigate information pollution in GNNs, separate fully connected layers (FC) transform features from relations, text attributes, and vision modalities, represented as 
\begin{equation}
    h^m_i = FC_m(W_m,x_i^m), m\in \{ r,t,v \}
\end{equation}
where $r,t,v$ represent the relation, text attribute and vision modality, respectively. $x_i^m\in \mathbb{R}^{d_m}$ is the input feature of entity $e_i$ for the corresponding modality $m$. 
We follow \cite{yang2019aligning} to adopt bag-of-words for relation $(x^r)$ and text $(x^t)$, and pre-trained visual models for visual embeddings $(x^v)$. Entities lacking modal features receive randomly generated initial features, based on the distribution of existing modal features.

\noindent
\textbf{(3) Joint Semantic Embedding: }
To capture the finer-grained semantic information, we take modal-level semantic embedding to jointly represent the entity. Inspired by the MEAformer\cite{meaformer}, which follows the vanilla transformer \cite{vaswani2017attention}, we adapt Cross-modal Attention Weighted (CAW) for two types of sub-layers: the multi-head cross-modal attention block and the fully connected feed-forward networks. The attention block operates its attention function across $N_h$ parallel heads. The $i$-th head is parameterized by modally shared matrices $W_q^{(i)},W_k^{(i)},W_v^{(i)}$, transforming the multi-modal input $h_m$ into modal-aware query $h^mW_q^{(i)}$, key $h^mW_k^{(i)}$, and value $h^mW_v^{(i)}$ in $\mathbb{R}^{d_h}(d_h=d/H_h)$. It generates the following output for a given feature of modality $m$ in $\mathcal{M}=\{ g,r,t,v \}$:
\begin{equation}
    h^{ATT}_m=\bigoplus_{i=1}^{N_h}\sum_{j\in \mathcal{M}}\beta_{mj}^{(i)}h^mW_v^{(i)}\cdot W_o
\end{equation}
where $W_o \in \mathbb{R}^{d\times d}$ and $\bigoplus$ refers to the vector concatenation operation. The attention weight $(\beta_{mj})$ between an entity’s modality $m$ and $j$ in each head is formulated below:
\begin{equation}
    \beta_{mj} = \frac{exp((h^mW_q^{(i)})^\top h^jW_k^{(i)}/\sqrt{d_h})}{\sum_{n\in \mathcal{M}}exp((h^mW_q^{(i)})^\top h^nW_k^{(n)}/\sqrt{d_h})}
\end{equation}
Besides, layer normalization (LN) and residual connection (RC) are incorporated to stabilize training:
\begin{equation}
    \hat{h}^{ATT}_m = LN(h^{ATT}_m + h^m)
\end{equation}
The feed-forward network consists of two linear transformation layers and a ReLU activation function with LN and RC applied afterwards:
\begin{equation}
    \hat{h}^{ATT}_m = LN(ReLU(\hat{h}^{ATT}_mW_1+b_1)W_2+b_2+\hat{h}^{ATT}_m)
\end{equation}
where $W_1 \in \mathcal{R}^{d\times d_{in}}$ and $W_2 \in \mathcal{R}^{d_{in}\times d}$. Notably, we define the modal-level confidence $\tilde{w}^{m}$ for each modality $m$ as:
\begin{equation}
    \tilde{w}^{m}=\frac{\exp \left(\sum_{j \in \mathcal{M}} \sum_{i=0}^{N_{h}} \beta_{m j}^{(i)} / \sqrt{|\mathcal{M}| \times N_{h}}\right)}{\sum_{k \in \mathcal{M}} \exp \left(\sum_{j \in \mathcal{M}} \sum_{i=0}^{N_{h}} \beta_{k j}^{(i)} \sqrt{|\mathcal{M}| \times N_{h}}\right)}
\end{equation}
which captures crucial inter-modal interface information, and adaptively adjusts model’s cross-modal alignment preference for different modalities from each entity.

Let $\tilde{w}^{m}$ be the meta weight of entity $e_i$ for modality $m$; we formulate the joint embedding as:
\begin{equation}
    h_i^{Ori}= \bigoplus_{m\in \mathcal{M}}[\tilde{w}^{m}h_i^m], h_i^{Fus}= \bigoplus_{m\in \mathcal{M}}[\tilde{w}^{m}(\hat{h}^{ATT}_m)_i]
\end{equation}
where $h_i^{Ori}$ and $h_i^{Fus}$ are defined as the early and late fusion embedding, respectively. As a common paradigm for modality fusion in previous EA works \cite{m5,m6,mclea}, concatenation operation could be employed to prevent obscuring and over-smoothing modal diversities. We select $h_i^{Ori}$ as the final entity representation for evaluations according to our experiments, and we speculate that the modality specificity in late fusion embedding is consistently attenuated by the Transformer Layer \cite{zhou2021deepvit} which diminishes the distinction between entities.

\subsection{Multi-modal Semantic Learning}
\label{Multi-modal Semantic Learning}
Building upon Proposition \ref{pro2}, we establish the lower and upper limits of Dirichlet energy in every layer. This reveals that the quality of learning semantic representations is directly influenced by the preceding semantic information, limiting the quality from the perspective of Dirichlet energy.
\begin{myTh}
\label{MMSLearning}
    In the $k$-th layer of the semantic encoder, the lower and upper bounds are determined by $(k-1)$-th and initial semantic embedding, respectively. The training process equals optimizing the following problem:
    \begin{equation}
    \begin{aligned}
    \label{MMSLearningequation}
        \min &\quad \mathcal{L}_{task}^{(0)}+ \mathcal{L}_{task}^{(k)} +\sum_{m\in \mathcal{M}}(\mathcal{L}_m^{(k-1)} + \mathcal{L}_m^{(k)}) \\
        s.t. &\quad c_{min}\mathscr{L}(\mathbf{X}^{(k-1)}) \le \mathscr{L}(\mathbf{X}^{(k)}) \le c_{max}\mathscr{L}(\mathbf{X}^{(0)})
    \end{aligned}
    \end{equation}
    $\mathcal{L}_{task}$ denotes the cross-entropy loss of entity alignment task, $\mathcal{L}_m$ represents the intra-modal entropy from modality $m$, both $c_{min}$ and $c_{max}$ are positive hyperparameters..
\end{myTh}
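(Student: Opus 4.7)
The statement has two distinct pieces: (a) a quantitative inequality on the Dirichlet energy at layer $k$ in terms of the energies at layer $k-1$ and layer $0$, and (b) an interpretation of the semantic learning procedure as a particular constrained optimization. My plan is to obtain (a) by a direct iterative application of Proposition \ref{pro2}, and to justify (b) by exhibiting each term in the objective as either a task-fidelity requirement at an endpoint of the encoder or an intra-modal consistency requirement at one of the two layers that actually appear in the energy bound, with the Dirichlet-energy constraint recording the content of Proposition \ref{pro2}.

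\textbf{Step 1: the two-sided Dirichlet energy bound.} For the lower bound I would simply invoke Proposition \ref{pro2} once at the $k$-th layer, obtaining $\mathscr{L}(\mathbf{X}^{(k)})\ge p_{min}^{(k)}\mathscr{L}(\mathbf{X}^{(k-1)})$, so I can set $c_{min}:=p_{min}^{(k)}>0$ (positivity holds since the minimum squared singular value of a trained weight is nonnegative, and strict positivity can be assumed whenever $W^{(k)}$ has full rank, a mild genericity condition). For the upper bound I would chain the single-layer upper estimate of Proposition \ref{pro2} telescopically:
\begin{equation*}
\mathscr{L}(\mathbf{X}^{(k)})\le p_{max}^{(k)}\mathscr{L}(\mathbf{X}^{(k-1)})\le \cdots \le \Bigl(\prod_{j=1}^{k}p_{max}^{(j)}\Bigr)\mathscr{L}(\mathbf{X}^{(0)}),
\end{equation*}
so that $c_{max}:=\prod_{j=1}^{k}p_{max}^{(j)}$ is positive and depends only on the spectra of the weight matrices. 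This gives the constraint stated in \eqref{MMSLearningequation}.

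\textbf{Step 2: from bounds to losses.} Having located the three layers that actually appear in the bound ($0$, $k-1$, $k$), I would next argue that each summand in the objective is the natural supervisory signal at the corresponding layer. The task losses $\mathcal{L}_{task}^{(0)}$ and $\mathcal{L}_{task}^{(k)}$ pin down the two endpoints of the encoder: they prevent the trivial solution in which the bound is satisfied by collapsing features, and they enforce that both the raw and deeply-processed representations remain informative for alignment. The intra-modal entropies $\mathcal{L}_{m}^{(k-1)}$ and $\mathcal{L}_{m}^{(k)}$ control the two layers whose energies flank the one-step inequality of Proposition \ref{pro2}; minimizing them keeps within-modality distributions well conditioned so that the singular-value constants $p_{min}^{(k)}, p_{max}^{(k)}$ stay bounded away from the degenerate values $0$ and $\infty$ that would trivialize the bound.

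\textbf{Step 3: equivalence of training and constrained problem.} To make the word ``equals'' precise, I would pass to a Lagrangian reformulation: the constrained program in \eqref{MMSLearningequation} is equivalent, under standard convex-penalty arguments, to minimizing the unconstrained sum of the task and intra-modal losses together with two nonnegative multipliers times the violation of the energy inequalities. This is exactly the composite objective the training loop of DESAlign implements, so the two problems share the same stationary points modulo the choice of multipliers; this last equivalence is where I expect the main obstacle to lie, since the statement conflates a design principle with a formal theorem. I would therefore frame Step 3 as a correspondence between KKT conditions and the gradient updates performed during training, and be explicit that $c_{min},c_{max}$ are data- and weight-dependent and are treated as hyperparameters in practice.
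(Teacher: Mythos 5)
Your Step 1 inverts the logic of the paper's argument and in doing so empties the constraint of content. You instantiate $c_{min}:=p_{min}^{(k)}$ and $c_{max}:=\prod_{j\le k}p_{max}^{(j)}$, so that both inequalities in the constraint become automatic consequences of Proposition \ref{pro2}, satisfied by every choice of weights; a vacuous constraint cannot do the work the proposition assigns to it. The paper's point, stated immediately after Proposition \ref{pro2}, is that under standard initialization and $l_2$ regularization the singular values of higher-layer weights collapse, $p_{min}^{(k)}\to 0$, and the Dirichlet energy is driven to zero (over-smoothing). Accordingly, in the paper's proof $c_{min}$ is a \emph{fixed} hyperparameter drawn from the interval $(0,1)$, and the lower inequality $\mathscr{L}(\mathbf{X}^{(k)})\ge c_{min}\mathscr{L}(\mathbf{X}^{(k-1)})>0$ is \emph{imposed} on training as a genuine restriction that becomes active precisely when the weights would otherwise degenerate; that is what forbids energy collapse. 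Likewise the upper bound is not obtained by telescoping the one-step estimate: the paper deliberately anchors it at $\mathscr{L}(\mathbf{X}^{(0)})$ because intermediate embeddings of the same class are \emph{supposed} to have smaller energy than the initial features (a layer-to-layer upper bound would work against this), and $c_{max}\mathscr{L}(\mathbf{X}^{(0)})$ is chosen only to prevent over-separating. Your derivation, by contrast, leaves $\mathscr{L}(\mathbf{X}^{(k)})$ free to be arbitrarily small, which is exactly the failure mode being targeted.

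Your Steps 2--3 are close in spirit to what the paper actually argues (matching $\mathcal{L}_{task}^{(0)}$, $\mathcal{L}_{task}^{(k)}$, $\mathcal{L}_m^{(k-1)}$, $\mathcal{L}_m^{(k)}$ to the two sides of the constraint and to the layers $0$, $k-1$, $k$), and your Lagrangian/KKT framing is more precise than anything in the paper, which never formalizes the word ``equals''; its proof is an informal justification of the two constraint choices rather than a derivation. You correctly anticipate that the statement conflates a design principle with a theorem. But to recover the intended claim you must treat $c_{min}$ and $c_{max}$ as externally fixed constants defining a non-trivial feasible set, not as the weight-dependent quantities supplied by Proposition \ref{pro2}.
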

\begin{proof}
From interval $(0, 1)$, there exists hyperparameter $c_{min}$ satisfying constraint of $\mathscr{L}(\mathbf{X}^{(k)})\ge c^k_{min}\mathscr{L}(\mathbf{X}^{(k-1)}) > 0$. In such a way, the over-smoothing caused by the semantic inconsistency issue is overcome since the Dirichlet energies of all the layers are larger than appropriate limits related to $c^k_{min}$. As the left part of the inequality emphasizes the semantic consistency caused by every modality, we need to pay attention to modal features in layer $k$ and $k-1$.

Compared with the initial transformed feature $\mathbf{X}^{(0)}$, the intermediate node embeddings of the same class are expected to be merged closely to have a smaller Dirichlet energy and facilitate the downstream applications. Therefore, we exploit the upper limit $c_{max}\mathscr{L}(\mathbf{X}^{(0)})$ to avoid over-separating, which emphasizes the joint initial and final semantic feature.
\end{proof}

To be specific, we introduce a unified entity alignment contrastive learning framework, inspired by \cite{mclea}, to consolidate the training objectives of the modules. For each entity pair $(e^1_i ,e^2_i)$ in $S$, we define $N_i……{ng} = \{ e^1_j|\forall e^1_j\in \mathcal{E}_1, j\neq i \} \cup \{ e^2_j|\forall e^2_j\in \mathcal{E}_2, j\neq i \}$ as its negative entity set. To improve efficiency, we adopt the in-batch negative sampling strategy \cite{chen2020simple}, restricting the sampling scope of $N_i^{ng}$ to the mini-batch $\mathcal{B}$. The alignment probability distribution is defined as:
\begin{equation}
\label{distribution}
    p_m(e_1^i ,e_2^i)=\frac{\gamma_m(e^1_i ,e^2_i)}{\gamma_m(e^1_i ,e^2_i) + \sum_{e_j\in N_i^{ng}}\gamma_m(e^1_i ,e_j)}
\end{equation}
where $\gamma_m(e_i ,e_j) =exp((h_i^{Ori})^\top, h_j^{Ori}/\tau)$ and $\tau$ is the temperature hyperparameter.  Considering the alignment direction for entity pairs reflected in equation \ref{distribution}, we define the bi-directional alignment objective for each modality $m$ as:
\begin{equation}
    \mathcal{L}_m = -\log (\phi_m(e_1^i ,e_2^i)*(p_m(e_1^i ,e_2^i) + p_m(e_2^i ,e_1^i))) /2
\end{equation}
Considering the symmetric nature of MMEA and the varying quality of aligned entities and their modal features within each KG, we employ the minimum confidence value to minimize errors. For example, $e^1_i$ may possess high-quality image and many text attributes while $e^2_i$ lacks image and text information. In such cases, using the original objective for feature alignment will inadvertently align meaningful features with random noise, thereby disrupting the encoder training process. To mitigate this issue, we define $\phi_m(e_1^i ,e_2^i)$ as the minimum confidence value for entities $e^1_i$ and $e^2_i$ in modality $m$, calculated by $\phi_m(e_1^i ,e_2^i)=Min(\tilde{w}^{m}_i,\tilde{w}^{m}_j)$.

Then we denote the joint training objective $\mathcal{L}_{task}^{(0)}$ and $\mathcal{L}_{task}^{(k)}$ when using the original and fused joint semantic embeddings, $h_i^{Ori}$ and $h_i^{Fus}$, respectively. Here, $\gamma_{\{Ori,Fus\}}(e_i ,e_j)$ is set to $exp(h_i^{\{Ori,Fus\} ^\top} h_j^{\{Ori,Fus\}}/\tau)$ and $\phi_m = 1$.

\subsection{Semantic Propagation}
\label{Semantic Propagation}
While the Dirichlet energy is convex and it is possible to derive its optimal solution in a closed-form, as discussed in \cite{rossi2022unreasonable}, its computational complexity renders it impractical for large-scale graphs, especially when numerous entities lack certain modal features. Instead, we turn our attention to the associated gradient flow $\mathbf{\bar{x}(t)} = -\nabla \mathscr{L}(\mathbf{x(t)})$. This flow can be treated as a propagation equation with the boundary condition $\mathbf{x_c(t)} = \mathbf{x_c}$. We posit that missing modal features can be expressed by existing modal information. The solution at the missing features, denoted as $\mathbf{x}_{o2} = \lim_{t\to \infty} \mathbf{(\mathbf{x}_c(t),\mathbf{x}_{o1}(t)})$, provides the desired interpolation.

The gradient of Dirichlet energy is $\nabla_{\mathbf{x}}\mathscr{L} = \mathbf{\Delta}\mathbf{x}$, and the gradient flow has the propagation form with initial and boundary conditions:
\begin{equation*}
    \mathbf{\bar{x}(t)} = -\mathbf{\Delta}\mathbf{x},
    \mathbf{x(0)} = 
    \left [ 
    \begin{array}{l}
        \mathbf{x}_c \\
        \mathbf{x}_{o1}\\
        \mathbf{x}_{o2}\\
    \end{array}
\right ],
    \mathbf{x_c(t)} = \mathbf{x_c}
\end{equation*}
By incorporating the boundary conditions, the propagation equation can be compactly expressed as:
\begin{equation}
\begin{aligned}
\label{propagation equation}
\left [ 
    \begin{array}{c}
        \bar{\mathbf{x}}_c(t) \\
        \bar{\mathbf{x}}_{o1}(t)\\
        \bar{\mathbf{x}}_{o2}(t)\\
    \end{array}
\right ] & = 
-\left [ 
    \begin{array}{ccc}
        \mathbf{0} & \mathbf{0} & \mathbf{0}\\
        \mathbf{\Delta}_{o1c} & \mathbf{\Delta}_{o1o1} & \mathbf{\Delta}_{o1o2}\\
        \mathbf{\Delta}_{o2c} & \mathbf{\Delta}_{o2o1} & \mathbf{\Delta}_{o2o2}\\
    \end{array}
\right ]
\left [ 
    \begin{array}{c}
        {\mathbf{x}}_c \\
        {\mathbf{x}}_{o1}(t)\\
        {\mathbf{x}}_{o2}(t)\\
    \end{array}
\right ] \\
& = -\left [ 
    \begin{array}{c}
        \mathbf{0}\\
        \mathbf{\Delta}_{o1c}{\mathbf{x}}_c + \mathbf{\Delta}_{o1o1}{\mathbf{x}}_{o1}(t) + \mathbf{\Delta}_{o1o2}{\mathbf{x}}_{o2}(t)\\
        \mathbf{\Delta}_{o2c}{\mathbf{x}}_c + \mathbf{\Delta}_{o2o1}{\mathbf{x}}_{o1}(t) + \mathbf{\Delta}_{o2o2}{\mathbf{x}}_{o2}(t)\\
    \end{array}
\right ]    
\end{aligned}
\end{equation}
As expected, the gradient flow of the semantic-consistent features is $\mathbf{0}$, given that they do not change and exhibit stationary properties during propagation. The evolution of the missing features $\mathbf{x}_{o2}$ can be regarded as a propagation equation with a constant source $\mathbf{\Delta}_{o2c}{\mathbf{x}}_c$ coming from the boundary (semantic-consistent) nodes. Then we have the following proposition.
\begin{myTh}
\label{pro4}
    The absent modal features from the missing modality can be expressed by existing modal features in the process of propagation: ${\mathbf{x}}_{o2} = f(\mathbf{x}_{c}, \mathbf{x}_{o1})$
\end{myTh}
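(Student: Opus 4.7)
The plan is to analyze the equilibrium of the gradient flow in equation (\ref{propagation equation}) and exhibit $\mathbf{x}_{o2}$ as an explicit (affine) function of $\mathbf{x}_c$ and the limiting value of $\mathbf{x}_{o1}(t)$. Since the semantic-consistent block $\mathbf{x}_c$ is pinned by the Dirichlet boundary condition while the remaining blocks evolve under the principal submatrix of $\mathbf{\Delta}$ indexed by $o1\cup o2$, the natural first step is to verify that the trajectory $(\mathbf{x}_{o1}(t), \mathbf{x}_{o2}(t))$ converges as $t\to\infty$ to a unique stationary point, and then read off the steady-state relation from the last block of the flow.

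Concretely, setting $\bar{\mathbf{x}}_{o2}(t)=\mathbf{0}$ at $t=\infty$ in the last row of (\ref{propagation equation}) yields the linear system
\begin{equation*}
\mathbf{\Delta}_{o2o2}\,\mathbf{x}_{o2}(\infty) = -\mathbf{\Delta}_{o2c}\,\mathbf{x}_c - \mathbf{\Delta}_{o2o1}\,\mathbf{x}_{o1}(\infty).
\end{equation*}
Provided $\mathbf{\Delta}_{o2o2}$ is invertible, this immediately gives
\begin{equation*}
\mathbf{x}_{o2}(\infty) = -\mathbf{\Delta}_{o2o2}^{-1}\bigl(\mathbf{\Delta}_{o2c}\mathbf{x}_c + \mathbf{\Delta}_{o2o1}\mathbf{x}_{o1}(\infty)\bigr) =: f(\mathbf{x}_c, \mathbf{x}_{o1}),
\end{equation*}
which is the closed-form interpolation map asserted by the proposition. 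Convergence of the full flow to this fixed point follows by standard linear-ODE arguments once the restricted operator $\widetilde{\mathbf{\Delta}}$ (the submatrix of $\mathbf{\Delta}$ indexed by $o1\cup o2$) is shown to be strictly positive definite, since then the deviation of $(\mathbf{x}_{o1}(t),\mathbf{x}_{o2}(t))$ from its equilibrium decays exponentially at rate $\lambda_{\min}(\widetilde{\mathbf{\Delta}})$, which also justifies passing to the limit in the $o2$-block equation.

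The main obstacle is precisely this invertibility/positive-definiteness claim. The full normalized Laplacian $\mathbf{\Delta}$ is only positive semi-definite, as constant-on-component vectors lie in its kernel, so invertibility of a principal submatrix is not automatic. I would invoke the graph analogue of the Dirichlet problem: deleting the rows and columns indexed by a boundary set $\mathcal{E}_c$ that intersects every connected component yields a strictly positive definite block, which is a classical consequence of the maximum principle for the Laplacian on connected graphs. If some component of $G$ fails to contain any vertex of $\mathcal{E}_c$, the proposition should be read componentwise or stabilized by a small regularizer $\epsilon\mathbf{I}$; this hypothesis matches the practical regime in which the semantically consistent subset is sufficiently dense. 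Once invertibility is established, the explicit Euler discretization discussed in Section \ref{Semantic Propagation} becomes the natural numerical realization of the map $f$, tying the theoretical interpolation statement to the algorithm actually deployed in DESAlign.
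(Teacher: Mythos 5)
Your proposal takes essentially the same route as the paper: both derive the interpolation map by imposing stationarity of the gradient flow on the $o2$ block (equivalently $\nabla_{\mathbf{x}_{o2}}\mathscr{L}=0$ for the convex Dirichlet energy) and then inverting $\mathbf{\Delta}_{o2o2}$ to obtain $\mathbf{x}_{o2}=-\mathbf{\Delta}_{o2o2}^{-1}\left(\mathbf{\Delta}_{o2c}\mathbf{x}_c+\mathbf{\Delta}_{o2o1}\mathbf{x}_{o1}\right)$. Your additional care about convergence of the flow and about when the principal submatrix is actually positive definite (every component must meet the pinned boundary set) is a strengthening of the paper's one-line appeal to the invertibility of the sub-Laplacian of a connected graph, not a departure from its argument.
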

\begin{proof}
Since the Dirichlet energy $\mathscr{L}$ is convex, its global minimizer is given by the solution to the closed-form equation $\nabla_{\mathbf{x}_{o2}}\mathscr{L} = 0$ and by rearranging the final rows of (\ref{propagation equation}), we get the solution:
\begin{equation}
    {\mathbf{x}}_{o2}(t) = -\mathbf{\Delta}_{o2o2}^{-1}\mathbf{\Delta}_{o2c}{\mathbf{x}}_c -\mathbf{\Delta}_{o2o2}^{-1}\mathbf{\Delta}_{o2o1}{\mathbf{x}}_{o1}(t)
\end{equation}
The sub-Laplacian matrix of an undirected connected graph is invertible \cite{rossi2022unreasonable}, thus $\mathbf{\Delta}_{o2o2}$ is non-singular, ensuring the existence of this solution.
\end{proof}
However, solving a system of linear equations is computationally expensive (incurring $O(|\mathcal{E}_0|^3)$ complexity for matrix inversion) and thus intractable for anything but only small graphs. As an alternative, we can discretize the propagation equation (\ref{propagation equation}) and solve it by an iterative numerical scheme. Approximating the temporal derivative with a forward difference with the time variable $t$ discretized using a fixed step ($t=hk$ for step size $h > 0$ and $k = 1,2,\dots$), we obtain the explicit Euler scheme:
\begin{equation}
\begin{aligned}
    \mathbf{x}^{(k+1)}  = \mathbf{x}^{(k)}  - h
    \left [ 
    \begin{array}{ccc}
        \mathbf{0} & \mathbf{0} & \mathbf{0}\\
        \mathbf{\Delta}_{o1c} & \mathbf{\Delta}_{o1o1} & \mathbf{\Delta}_{o1o2}\\
        \mathbf{\Delta}_{o2c} & \mathbf{\Delta}_{o2o1} & \mathbf{\Delta}_{o2o2}\\
    \end{array}
\right ]\mathbf{x}^{(k)} \\
 = \left [ 
    \begin{array}{ccc}
        \mathbf{I} & \mathbf{0} & \mathbf{0}\\
        -h\mathbf{\Delta}_{o1c} & \mathbf{I} -h\mathbf{\Delta}_{o1o1} & -h\mathbf{\Delta}_{o1o2}\\
        -h\mathbf{\Delta}_{o2c} & -h\mathbf{\Delta}_{o2o1} & \mathbf{I} - h\mathbf{\Delta}_{o2o2}\\
    \end{array}
\right ]\mathbf{x}^{(k)}
\end{aligned}
\end{equation}
For the special case of $h = 1$, we can use the definition of Laplacian matrix $\mathbf{\Delta} = \mathbf{I} - \mathbf{\widetilde{A}}$ to rewrite the iteration formula:
\begin{equation}
\label{filter}
    \mathbf{x}^{(k+1)} = \left [ 
    \begin{array}{ccc}
        \mathbf{I} & \mathbf{0} & \mathbf{0}\\
        \mathbf{\widetilde{A}}_{o1c} & \mathbf{\widetilde{A}}_{o1o1} & \mathbf{\widetilde{A}}_{o1o2}\\
        \mathbf{\widetilde{A}}_{o2c} & \mathbf{\widetilde{A}}_{o2o1} & \mathbf{\widetilde{A}}_{o2o2}\\
    \end{array}
\right ]\mathbf{x}^{(k)}
\end{equation}
The Euler scheme is the gradient descent of the Dirichlet energy. Thus, applying the scheme decreases the Dirichlet energy, resulting in the features becoming increasingly smooth. Simultaneously, it utilizes existing semantic information to fill in the missing part of the semantic features. Iteration \ref{filter} can be interpreted as successive low-pass filtering. Our experiments indicate that the higher the rate of missing features, the stronger the low-pass filtering effect.

We note that the update in Equation \ref{filter} is equivalent to first multiplying the feature vector $\mathbf{x}$ by the matrix $\mathbf{\widetilde{A}}$, and then resetting the known features to their true value. This yields an extremely simple and scalable iterative algorithm, outlined in Algorithm \ref{Algorithm}, to reconstruct the missing modal features in the semantic space, which we refer to as \textit{Semantic Propagation}.
\begin{equation}
    \mathbf{x}^{(k+1)} \leftarrow \mathbf{\widetilde{A}}\mathbf{x}^{(k)}; \mathbf{x}^{(k+1)}_c \leftarrow \mathbf{x}^{(k)}_c
\end{equation}
We observe that features $\mathbf{x}_{o2}$ take on distinct calculation forms across iteration rounds, indicating varying semantic information in $\mathbf{x}_{o2}$. 
To maximize the utilization of this information and preserve the original distribution of consistent features $\mathbf{x}_c$, we calculate the mean of pairwise similarities from all propagation processes as the final pairwise similarities.

\begin{algorithm}[t]
\caption{DESAlign}
\label{Algorithm}
\KwIn{\\
$\Phi^\prime$: the seed alignments of two MMKGs. \\
$\mathbf{A_s}, \mathbf{A_t}$: the adjacency matrices of MMKGs.\\
$x^g$: the randomly initialized structural features.\\
$x^r, x^t, x^v$: relation, text attribute and image features.\\}
\KwOut{The pairwise similarity $\Omega$.}
Initialize the modality set $\mathcal{M}=\{ g,r,t,v \}$\;
Normalize the adjacency matrices $\mathbf{A_s}$, $\mathbf{A_t}$ as $\mathbf{\widetilde{A}_s}, \mathbf{\widetilde{A}_t}$\;
\For{$k\ \mathrm{in}\ \mathrm{range}(max\_epoch)$}{
$h^g \leftarrow$ calculate the structure embedding for each entity $e_i$ by using $h^g_i = GAT(\mathbf{A_s}/\mathbf{A_t}; x^g_i)$\;
$h^m, m\in \{ r,t,v \} \leftarrow$ calculate the modal embedding for each entity $e_i$ by using $h^m_i = FC_m(x_i^m)$\;
$\tilde{w}^{m},\hat{h}^m, m\in \mathcal{M} \leftarrow$ calculate the cross-modal weight and embedding by using $\tilde{w}^{m},\hat{h}^m = CAW(h^m)$\;
$\mathbf{X}^{(0)} \leftarrow$ generate the initial semantic embedding by concatenating weighted features $[\tilde{w}^{m}h_i^m]$\;
$\mathbf{X}^{(k)}, \mathbf{X}^{(k-1)} \leftarrow$ generate the semantic embedding by concatenating weighted features $[\tilde{w}^{m}\hat{h}_i^m]$\;
$Loss \leftarrow$ define the loss according to Equation \ref{MMSLearningequation}\;
Update parameters of DESAlign via back-propagating $Loss$\;
\For{$j$ in range($max\_iter$)}{
$\mathbf{X}_{s/ t} \leftarrow \mathbf{\widetilde{A}}_{s/ t}\mathbf{X}_{s/ t}$\;
$\Omega_j \leftarrow$ calculate the pairwise similarity between $\mathbf{X_s}$ and $\mathbf{X_t}$, $\mathbf{X} = \mathbf{X_s} \cup \mathbf{X_t}$\;
}
Generate $\Omega$ by $\Omega = \sum_{j=0}^{max\_iter}\Omega_j / {max\_iter}$
}
\end{algorithm}

\section{Experiments}
\label{MAINExperiments}
In this section, we assess the effectiveness of the proposed DESAlign. We aim to answer the following questions:

\begin{itemize}
\item (Q1) Is DESAlign more robust to semantic inconsistency compared with other alignment methods?
\item (Q2) Does DESAlign outperform the state-of-the-art methods on real-world graph alignment applications?
\item (Q3) Does the multi-modal semantic learning and semantic propagation in DESAlign genuinely contribute to the alignment performance?
\item (Q4) What is the performance of DESAlign and other prominent methods in weakly supervised settings?
\item (Q5) How does semantic propagation perform with different iterations?
\end{itemize}

Our code and data are provided in the supplementary materials and will be publicly available upon acceptance.

\begin{table}[t!]
\caption{Statistics for datasets. EA pairs refer to the seed aligments}
\label{datasets}
\Huge
\renewcommand\arraystretch{1.5}
\setlength{\tabcolsep}{3pt}
\centering
\resizebox{1\linewidth}{!}{
\begin{tabular}{c|c|ccccccc}
\toprule[1.5pt]
Dataset                        & KG            & Ent.   & Rel.  & Att.  & R.Triples & A.Triples & Image  & EA pairs                \\ 
\midrule[1.5pt]
\multirow{2}{*}{FB-DB}   & FB15K         & 14,951 & 1,345 & 116   & 592,213     & 29,395      & 13,444 & \multirow{2}{*}{12,846} \\
                               & DB15K         & 12,842 & 279   & 225   & 89,197      & 48,080      & 12,837 &                         \\ \midrule
\multirow{2}{*}{FB-YAGO} & FB15K         & 14,951 & 1,345 & 116   & 592,213     & 29,395      & 13,444 & \multirow{2}{*}{11,199} \\
                               & YAGO15K       & 15,404 & 32    & 7     & 122,886     & 23,532      & 11,194 &                         \\ \midrule
\multirow{2}{*}{\begin{tabular}[c]{@{}c@{}}DBP15K\\ ZH-EN\end{tabular}}   & ZH(Chinese)   & 19,388 & 1,701 & 8,111 & 70,414      & 248,035     & 15,912 & \multirow{2}{*}{15,000} \\
                               & EN (English)  & 19,572 & 1,323 & 7,173 & 95,142      & 343,218     & 14,125 &                         \\ \midrule
\multirow{2}{*}{\begin{tabular}[c]{@{}c@{}}DBP15K\\ JA-EN\end{tabular}}   & JA (Japanese) & 19,814 & 1,299 & 5,882 & 77,214      & 248,991     & 12,739 & \multirow{2}{*}{15,000} \\
                               & EN (English)  & 19,780 & 1,153 & 6,066 & 93,484      & 320,616     & 13,741 &                         \\ \midrule
\multirow{2}{*}{\begin{tabular}[c]{@{}c@{}}DBP15K\\ FR-EN\end{tabular}}    & FR (French)   & 19,661 & 903   & 4,547 & 105,998     & 273,825     & 14,174 & \multirow{2}{*}{15,000} \\
                               & EN (English)  & 19,993 & 1,208 & 6,422 & 115,722     & 351,094     & 13,858 &                         \\ 
\bottomrule[1.5pt]
\end{tabular}}
\end{table}

\subsection{Experimental Settings}
We first introduce the experimental settings, covering datasets, baseline methods, evaluation metrics, and implementation details.

\subsubsection{\textbf{Datasets}}
We list the statistics of all datasets used in the experiments in Table \ref{datasets}. We consider two types of datasets:
\begin{itemize}
    \item Monolingual: We select FB15K-DB15K (FBDB15K) and FB15K-YAGO15K (FBYG15K) from MMKG \cite{liu2019mmkg} with three data splits: $R_{seed} \in \{0.2, 0.5, 0.8\}$.
    \item Bilingual: DBP15K \cite{sun2017cross} contains three datasets built from the multilingual versions of DBpedia, including DBP15K$_{FR-EN, JA-EN, ZH-EN}$. Each contains about 400K triples and 15K seed alignments with 30\% of them as seed alignments ($R_{seed} = 0.3$). We adopt their multi-model variant \cite{m6} with images attached to the entities. 
\end{itemize}
Additionally, we collect variants of these datasets with different degrees of semantic inconsistency to analyze and validate the robustness of DESAlign and baselines. Specifically, we set $R_{seed}$ of FBDB15K and DBP15K$_{FR-EN}$ from 1\% to 30\% to analyze the method’s robustness and effectiveness with different supervised settings. We also set $R_{img}$ (image ratio) of DBP15K and $R_{tex}$ (attribute ratio) of FBDB15K and FBYG15K from 5\% to 60\% to validate robustness against semantic inconsistency.

\subsubsection{\textbf{Baseline}}
We compare DESAlign with 18 entity alignment baselines, including twelve methods based on the basic model (TransE\cite{bordes2013translating}, IPTransE\cite{zhu2017iterative}, GCN-align\cite{wang2018cross}, SEA\cite{pei2019semi}, IMUSE\cite{he2019unsupervised}, AttrGNN\cite{liu2020exploring}, PoE, PoE-rni\cite{liu2019mmkg}, MMEA\cite{chen2020mmea}, HEA\cite{m2}, ACK\cite{ack}, MUGCN\cite{cao2019multi}, ALiNET\cite{sun2020knowledge}) and six prominent methods with an iterative strategy (BootEA\cite{sun2018bootstrapping}, NAEA\cite{zhu2019neighborhood}, EVA\cite{m6}, MSNEA\cite{chen2022multi}, MCLEA\cite{mclea}, MEAformer\cite{meaformer}). The iterative strategy, following \cite{mclea}, can be viewed as a buffering mechanism, which maintains a temporary cache to store cross-graph mutual nearest entity pairs from the testing set. While UMAEA \cite{UMAEA} delves into addressing visual ambiguity and missing data issues, its adoption of a multi-stage pipeline renders it non-end-to-end, and hence, we exclude it from consideration in our comparative study.

\subsubsection{\textbf{Evaluation Metrics} }
We compute entity similarities using cosine similarity and assess performance with H@k and MRR metrics. H@k measures the proportion of correctly aligned entities ranked within the top $k$ positions:
\begin{equation}
H@k = \frac{1}{|S_t|}\sum_{i=1}^{|S_t|}\mathbb{I}[rank_{i}\le k]
\end{equation}
where $rank_i$ is the ranking of the first accurate alignment for query entity $i$, $\mathbb{I}$ indicates correctness ($\mathbb{I}=1$ if $rank_{i}\le k$, else $0$), and $S_t$ is the set of test alignments. MRR evaluates the average of reciprocal ranks of the first correct answer for queries, reflecting the model's precision across a set of queries:
\begin{equation}
MRR = \frac{1}{|S_t|}\sum_{i=1}^{|S_t|}\frac{1}{rank_i}
\end{equation}

\subsubsection{\textbf{Implementation Details}}
For all mentioned baselines, we run the code provided by the authors and keep the default configuration. In DESAlign, we apply the following settings:
(\romannumeral1) The hidden layer dimensions $d$ for all networks are unified into 300. The total epochs for baselines are set to 500 with an optional iterative training strategy applied for another 500 epochs, following \cite{mclea}. Training strategies including cosine warm-up schedule (15\% steps for LR warm-up), early stopping, and gradient accumulation are adopted.
(\romannumeral2) The AdamW optimizer ($\beta_1 = 0.9$, $\beta_2 = 0.999$) is used, with a fixed batch size of 3500.
(\romannumeral3) To demonstrate model stability, following \cite{chen2020mmea, mclea}, the vision encoders are set to ResNet-152 \cite{he2016deep} where the vision feature dimension $d_v$ is 2048.
(\romannumeral4) An alignment editing method is employed to reduce error accumulation \cite{sun2018bootstrapping}.
(\romannumeral5) Following \cite{yang2019aligning}, Bag-of-Words (BoW) is selected for encoding relations ($x^r$) and attributes ($x^a$) as fixed-length (i.e., $d_r$ = $d_a = 1000$) vectors.
(\romannumeral6) $\tau$ is set to 0.1, determining how much attention the contrast loss pays to difficult negative samples, and the head number $N_h$ in multi-head attention is set to 1.
Despite potential performance variations resulting from parameter searching, our focus remained on achieving broad applicability rather than fine-tuning for specific datasets. During iterative training, the pipeline is repeated. All experiments are performed on a computing server running Ubuntu 22.04 with an Intel Xeon Silver 4210R CPU and two NVIDIA A100 GPUs.

\subsection{(Q1) Alignment over Semantic Inconsistency}

\begin{table*}[t!]
\caption{Main results of Prominent Methods with Varying Ratio of Text Attributes.}
\label{Varying Text}
\renewcommand\arraystretch{1}
\setlength{\tabcolsep}{5pt}
\centering
\large
\vspace{-0.2cm}
\resizebox{\textwidth}{!}{
\begin{tabular}{cc|cccccccccccccccccc}
\toprule[1.2pt]
\multicolumn{2}{c|}{Text Ratio}                                                                                      & \multicolumn{3}{c}{$R_{tex} = 5\%$}                  & \multicolumn{3}{c}{$R_{tex} = 20\%$}                   & \multicolumn{3}{c}{$R_{tex} = 30\%$}                   & \multicolumn{3}{c}{$R_{tex} = 40\%$}                   & \multicolumn{3}{c}{$R_{tex} = 50\%$}                   & \multicolumn{3}{c}{$R_{tex} = 60\%$}    \cr
 \cmidrule(lr){1-2}\cmidrule(lr){3-5}\cmidrule(lr){6-8}\cmidrule(lr){9-11}\cmidrule(lr){12-14} \cmidrule(lr){15-17} \cmidrule(lr){18-20} 
\multicolumn{1}{c|}{Datasets}                             & Model            & H@1           & H@10          & MRR           & H@1           & H@10          & MRR           & H@1           & H@10          & MRR           & H@1           & H@10          & MRR           & H@1           & H@10          & MRR           & H@1           & H@10          & MRR           \\ 
\midrule
\multicolumn{1}{c|}{\multirow{5}{*}{\rotatebox{90}{\begin{tabular}[c]{@{}c@{}}FB15K\\ DB15K\end{tabular}}}}    & EVA              & 9.4           & 29.2          & 16.1          & 9.3           & 29.2          & 15.9          & 9.3           & 29.2          & 16.0          & 9.1           & 28.9          & 15.8          & 9.2           & 29.0          & 15.9          & 9.3           & 29.0          & 16.0          \\
\multicolumn{1}{c|}{}                                     & MCLEA            & 33.8          & 63.8          & 44.0          & 33.5          & 63.4          & 43.8          & 33.4          & 63.3          & 43.6          & 32.7          & 63.0          & 43.0          & 32.3          & 62.1          & 42.5          & 32.0          & 61.7          & 42.2          \\
\multicolumn{1}{c|}{}                                     & MEAformer        & 36.4          & 67.8          & 47.1          & 33.8          & 66.1          & 44.7          & 33.6          & 64.9          & 44.2          & 33.8          & 64.8          & 44.2          & 33.4          & 64.9          & 44.0          & 33.6          & 65.4          & 44.3          \\
\cmidrule{2-20}
\multicolumn{1}{c|}{}                                     & \textbf{Ours}    & \textbf{47.1} & \textbf{73.5} & \textbf{56.3} & \textbf{46.8} & \textbf{73.4} & \textbf{56.1} & \textbf{46.6} & \textbf{73.5} & \textbf{56.0} & \textbf{46.8} & \textbf{73.6} & \textbf{56.1} & \textbf{47.2} & \textbf{73.8} & \textbf{56.3} & \textbf{47.2} & \textbf{74.0} & \textbf{56.5} \\
\multicolumn{1}{c|}{}                                     & \textbf{Impprov.} & \textbf{10.7} & \textbf{5.7}  & \textbf{9.2}  & \textbf{13.0} & \textbf{7.3}  & \textbf{11.4} & \textbf{13.0} & \textbf{8.6}  & \textbf{11.8} & \textbf{13.0} & \textbf{8.8}  & \textbf{11.9} & \textbf{13.8} & \textbf{8.9}  & \textbf{12.3} & \textbf{13.6} & \textbf{8.6}  & \textbf{12.2} \\
\midrule
\multicolumn{1}{c|}{\multirow{5}{*}{\rotatebox{90}{\begin{tabular}[c]{@{}c@{}}FB15K\\ YAGO15K\end{tabular}}}} & EVA              & 8.8           & 24.9          & 14.3          & 8.8           & 25.1          & 14.4          & 8.7           & 24.9          & 14.4          & 8.8           & 24.9          & 14.4          & 8.9           & 25.2          & 14.5          & 8.9           & 25.2          & 14.5          \\
\multicolumn{1}{c|}{}                                     & MCLEA            & 29.4          & 54.6          & 38.0          & 29.5          & 54.7          & 38.0          & 29.3          & 54.3          & 37.9          & 29.4          & 54.1          & 37.9          & 29.1          & 54.2          & 37.7          & 29.0          & 54.0          & 37.6          \\
\multicolumn{1}{c|}{}                                     & MEAformer        & 30.1          & 55.1          & 39.3          & 30.2          & 55.0          & 39.5          & 30.5          & 55.0          & 39.5          & 29.6          & 54.6          & 38.9          & 29.7          & 54.3          & 38.9          & 29.2          & 54.2          & 38.5          \\
\cmidrule{2-20}
\multicolumn{1}{c|}{}                                     & \textbf{Ours}    & \textbf{39.1} & \textbf{64.2} & \textbf{47.6} & \textbf{39.0} & \textbf{65.4} & \textbf{48.1} & \textbf{39.3} & \textbf{64.7} & \textbf{48.0} & \textbf{39.4} & \textbf{64.8} & \textbf{48.0} & \textbf{39.6} & \textbf{64.8} & \textbf{48.1} & \textbf{39.5} & \textbf{64.8} & \textbf{48.2} \\
\multicolumn{1}{c|}{}                                     & \textbf{Impprov.} & \textbf{9.0}  & \textbf{9.1}  & \textbf{8.3}  & \textbf{8.8}  & \textbf{10.4} & \textbf{8.6}  & \textbf{8.8}  & \textbf{9.7}  & \textbf{8.5}  & \textbf{9.8}  & \textbf{10.2} & \textbf{9.1}  & \textbf{9.9}  & \textbf{10.5} & \textbf{9.2}  & \textbf{10.3} & \textbf{10.6} & \textbf{9.7}  \\
\bottomrule
\end{tabular}}
{\footnotesize{Remark: \textbf{bold} means best. "Improv." represents the percentage increase compared with SOTA.}}
\end{table*}

\begin{table*}[t!]
\caption{Main results of Prominent Methods with Varying Ratio of Image.}
\label{Varying Image}
\vspace{-0.2cm}
\renewcommand\arraystretch{1}
\setlength{\tabcolsep}{5pt}
\centering
\large
\resizebox{\textwidth}{!}{
\begin{tabular}{cc|cccccccccccccccccc}
\toprule[1.2pt]
\multicolumn{2}{c|}{Image Ratio}                                                                                      & \multicolumn{3}{c}{$R_{img} = 5\%$}                  & \multicolumn{3}{c}{$R_{img} = 20\%$}                   & \multicolumn{3}{c}{$R_{img} = 30\%$}                   & \multicolumn{3}{c}{$R_{img} = 40\%$}                   & \multicolumn{3}{c}{$R_{img} = 50\%$}                   & \multicolumn{3}{c}{$R_{img} = 60\%$}    \cr
 \cmidrule(lr){1-2}\cmidrule(lr){3-5}\cmidrule(lr){6-8}\cmidrule(lr){9-11}\cmidrule(lr){12-14} \cmidrule(lr){15-17} \cmidrule(lr){18-20} 
\multicolumn{1}{c|}{Datasets}                                                                & Model            & H@1           & H@10          & MRR           & H@1           & H@10          & MRR           & H@1           & H@10          & MRR           & H@1           & H@10          & MRR           & H@1           & H@10          & MRR           & H@1           & H@10          & MRR           \\ 
\midrule
\multicolumn{1}{c|}{\multirow{5}{*}{\rotatebox{90}{\begin{tabular}[c]{@{}c@{}}DBP15K\\ ZH-EN\end{tabular}}}} & EVA              & 62.3          & 87.8          & 71.5          & 62.4          & 87.8          & 71.6          & 61.7          & 89.1          & 71.5          & 62.3          & 87.5          & 71.1          & 61.4          & 88.7          & 71.2          & 62.5          & 87.6          & 71.7          \\
\multicolumn{1}{c|}{}                                                                        & MCLEA            & 63.8          & 90.5          & 73.2          & 58.8          & 86.5          & 68.6          & 58.5          & 90.1          & 69.6          & 61.1          & 87.4          & 70.4          & 63.1          & 91.7          & 73.1          & 66.1          & 89.6          & 74.4          \\
\multicolumn{1}{c|}{}                                                                        & MEAformer        & 65.8          & 91.5          & 75.1          & 65.7          & 91.6          & 75.0          & 66.8          & 91.8          & 75.7          & 66.8          & 92.2          & 75.9          & 69.0          & 92.7          & 77.6          & 71.3          & 93.2          & 79.2          \\
\cmidrule{2-20}
\multicolumn{1}{c|}{}                                                                        & \textbf{DESAlign}    & \textbf{73.3} & \textbf{93.7} & \textbf{80.8} & \textbf{73.7} & \textbf{94.1} & \textbf{81.2} & \textbf{73.8} & \textbf{94.0} & \textbf{81.2} & \textbf{74.5} & \textbf{94.4} & \textbf{81.9} & \textbf{75.8} & \textbf{94.6} & \textbf{82.8} & \textbf{77.0} & \textbf{95.0} & \textbf{83.6} \\
\multicolumn{1}{c|}{}                                                                        & \textbf{Improv.} & \textbf{7.5}  & \textbf{2.2}  & \textbf{5.7}  & \textbf{8.0}  & \textbf{2.5}  & \textbf{6.2}  & \textbf{7.0}  & \textbf{2.2}  & \textbf{5.5}  & \textbf{7.7}  & \textbf{2.2}  & \textbf{6.0}  & \textbf{6.8}  & \textbf{1.9}  & \textbf{5.2}  & \textbf{5.7}  & \textbf{1.8}  & \textbf{4.4}  \\ 
\midrule
\multicolumn{1}{c|}{\multirow{5}{*}{\rotatebox{90}{\begin{tabular}[c]{@{}c@{}}DBP15K\\ JA-EN\end{tabular}}}} & EVA              & 61.5          & 87.7          & 70.8          & 61.6          & 87.7          & 71.0          & 62.0          & 88.2          & 71.3          & 61.6          & 87.8          & 71.1          & 59.5          & 88.0          & 69.5          & 62.4          & 88.1          & 71.6          \\
\multicolumn{1}{c|}{}                                                                        & MCLEA            & 59.9          & 89.7          & 70.6          & 57.9          & 84.6          & 67.5          & 56.2          & 90.5          & 68.3          & 61.3          & 86.7          & 70.3          & 61.1          & 91.7          & 72.0          & 68.6          & 89.8          & 76.1          \\
\multicolumn{1}{c|}{}                                                                        & MEAformer        & 64.1          & 92.2          & 74.3          & 64.6          & 92.3          & 74.6          & 65.4          & 92.8          & 75.3          & 67.0          & 93.0          & 76.3          & 69.3          & 93.9          & 78.2          & 72.7          & 95.0          & 80.7          \\
\cmidrule{2-20}
\multicolumn{1}{c|}{}                                                                        & \textbf{DESAlign}    & \textbf{72.9} & \textbf{94.3} & \textbf{80.9} & \textbf{73.5} & \textbf{94.7} & \textbf{81.3} & \textbf{74.0} & \textbf{94.7} & \textbf{81.7} & \textbf{75.1} & \textbf{95.1} & \textbf{82.5} & \textbf{76.6} & \textbf{95.4} & \textbf{83.5} & \textbf{78.4} & \textbf{95.8} & \textbf{84.8} \\
\multicolumn{1}{c|}{}                                                                        & \textbf{Improv.} & \textbf{8.8}  & \textbf{2.1}  & \textbf{6.6}  & \textbf{8.9}  & \textbf{2.4}  & \textbf{6.7}  & \textbf{8.6}  & \textbf{1.9}  & \textbf{6.4}  & \textbf{8.1}  & \textbf{2.1}  & \textbf{6.2}  & \textbf{7.3}  & \textbf{1.5}  & \textbf{5.3}  & \textbf{5.7}  & \textbf{0.8}  & \textbf{4.1}  \\ 
\midrule
\multicolumn{1}{c|}{\multirow{5}{*}{\rotatebox{90}{\begin{tabular}[c]{@{}c@{}}DBP15K\\ FR-EN\end{tabular}}}} & EVA              & 62.4          & 89.5          & 72.0          & 62.4          & 89.5          & 72.0          & 64.2          & 91.0          & 73.7          & 62.6          & 89.8          & 72.1          & 62.4          & 89.2          & 72.0          & 63.4          & 90.0          & 72.8          \\
\multicolumn{1}{c|}{}                                                                        & MCLEA            & 63.4          & 93.0          & 74.1          & 58.2          & 86.3          & 68.2          & 57.5          & 92.0          & 69.8          & 60.1          & 87.9          & 70.2          & 58.6          & 92.1          & 70.0          & 67.5          & 90.1          & 75.7          \\
\multicolumn{1}{c|}{}                                                                        & MEAformer        & 64.9          & 93.6          & 75.4          & 64.5          & 93.6          & 75.0          & 65.2          & 93.9          & 75.7          & 66.5          & 94.4          & 76.6          & 69.2          & 94.6          & 78.5          & 72.4          & 95.3          & 80.8          \\
\cmidrule{2-20}
\multicolumn{1}{c|}{}                                                                        & \textbf{DESAlign}    & \textbf{75.1} & \textbf{95.5} & \textbf{82.7} & \textbf{75.5} & \textbf{95.8} & \textbf{83.0} & \textbf{76.0} & \textbf{95.8} & \textbf{83.4} & \textbf{77.0} & \textbf{95.9} & \textbf{84.1} & \textbf{78.1} & \textbf{96.2} & \textbf{84.8} & \textbf{79.9} & \textbf{96.3} & \textbf{86.1} \\
\multicolumn{1}{c|}{}                                                                        & \textbf{Improv.} & \textbf{10.2} & \textbf{1.9}  & \textbf{7.3}  & \textbf{11.0} & \textbf{2.2}  & \textbf{8.0}  & \textbf{10.8} & \textbf{1.9}  & \textbf{7.7}  & \textbf{10.5} & \textbf{1.5}  & \textbf{7.5}  & \textbf{8.9}  & \textbf{1.6}  & \textbf{6.3}  & \textbf{7.5}  & \textbf{1.0}  & \textbf{5.3}  \\ 
\bottomrule[1.2pt]
\end{tabular}}
\end{table*}

\begin{table*}[t!]
\caption{Main results of monolingual datasets. \underline{underline} means runner-up.}
\label{monolingual result}
\large
\vspace{-0.2cm}
\renewcommand\arraystretch{1}
\setlength{\tabcolsep}{4pt}
\centering
\resizebox{1\textwidth}{!}{
\begin{tabular}{ccccccccccc|ccccccccc}
\toprule
\multicolumn{2}{c}{\multirow{2}{*}{Datasets}} & \multicolumn{9}{c|}{FB15K-DB15K} & \multicolumn{9}{c}{FB15K-YAGO15K} \cr
\cmidrule(lr){3-20}
\multicolumn{2}{c}{} & \multicolumn{3}{c}{$R_{seed}=$20\%}    & \multicolumn{3}{c}{$R_{seed}=$50\%}    & \multicolumn{3}{c|}{$R_{seed}=$80\%}     & \multicolumn{3}{c}{$R_{seed}=$20\%} & \multicolumn{3}{c}{$R_{seed}=$50\%} & \multicolumn{3}{c}{$R_{seed}=$80\%}  \cr
 \cmidrule(lr){1-2}\cmidrule(lr){3-5}\cmidrule(lr){6-8}\cmidrule(lr){9-11}\cmidrule(lr){12-14} \cmidrule(lr){15-17} \cmidrule(lr){18-20} 
\multicolumn{2}{c}{Models} &H@1  &H@10 &MRR &H@1  &H@10 &MRR &H@1 &H@10 &MRR &H@1  &H@10 &MRR &H@1  &H@10 &MRR &H@1  &H@10 &MRR \cr
\midrule
\multirow{15}{*}{\rotatebox{90}{Basic}}    &\multicolumn{1}{|c|}{TransE}           & 7.8           & 24.0          & 13.4          & 23.0          & 44.6          & 30.6          & 42.6          & 65.9          & 50.7          & 6.4           & 20.3          & 11.2          & 19.7          & 38.5          & 26.2          & 39.2          & 59.5          & 46.3          \cr
\multirow{15}{*}{}                           & \multicolumn{1}{|c|}{IPTransE}         & 6.5           & 21.5          & 9.4           & 21.0          & 42.1          & 28.3          & 40.3          & 62.7          & 46.9          & 4.7           & 16.9          & 8.4           & 20.1          & 36.9          & 24.8          & 40.1          & 60.2          & 45.8          \cr
\multirow{15}{*}{}                            & \multicolumn{1}{|c|}{GCN-align}        & 5.3           & 17.4          & 8.7           & 22.6          & 43.5          & 29.3          & 41.4          & 63.5          & 47.2          & 8.1           & 23.5          & 15.3          & 23.5          & 42.4          & 29.4          & 40.6          & 64.3          & 47.7          \cr
\multirow{15}{*}{}                            & \multicolumn{1}{|c|}{SEA}              & 17.0          & 42.5          & 25.5          & 37.3          & 65.7          & 47.0          & 51.2          & 78.4          & 50.5          & 14.1          & 37.1          & 21.8          & 29.4          & 57.7          & 38.8          & 51.4          & 77.3          & 60.5          \cr
\multirow{15}{*}{}                           & \multicolumn{1}{|c|}{IMUSE}            & 17.6          & 43.5          & 26.4          & 30.9          & 57.6          & 40.0          & 45.7          & 72.6          & 55.1          & 8.1           & 25.7          & 14.2          & 39.8          & 60.1          & 46.9          & 51.2          & 70.7          & 58.1          \cr
\multirow{15}{*}{}                           & \multicolumn{1}{|c|}{AttrGNN}          & 25.2          & 53.5          & 34.3          & 47.3          & 72.1          & 54.7          & 67.1          & 83.9          & 70.3          & 22.4          & 39.5          & 31.8          & 38.0          & 63.9          & 46.2          & 59.9          & 78.7          & 67.1          \cr
\multirow{15}{*}{}                          & \multicolumn{1}{|c|}{PoE}              & 12.6          & 25.1          & 17.0          & 46.4          & 65.8          & 53.3          & 66.6          & 82.0          & 72.1          & 11.3          & 22.9          & 15.4          & 34.7          & 53.6          & 41.4          & 57.3          & 74.6          & 63.5          \cr
\multirow{15}{*}{}                           & \multicolumn{1}{|c|}{PoE-rni}          & 23.2          & 39.0          & 28.3          & 38.0          & 55.7          & 44.2          & 50.2          & 64.1          & 55.8          & 25.0          & 49.5          & 33.4          & 41.1          & 66.9          & 49.8          & 49.2          & 70.5          & 57.2          \cr
\multirow{15}{*}{}                         & \multicolumn{1}{|c|}{MMEA}             & 26.5          & 54.1          & 35.7          & 41.7          & 70.3          & 51.2          & 59.0          & 86.9          & 68.5          & 23.4          & 48.0          & 31.7          & 40.3          & 64.5          & 48.6          & 59.8          & 83.9          & 68.2          \cr
\multirow{15}{*}{}                          & \multicolumn{1}{|c|}{EVA}              & 13.4          & 33.8          & 20.1          & 22.3          & 47.1          & 30.7          & 37.0          & 58.5          & 44.4          & 9.8           & 27.6          & 15.8          & 24.0          & 47.7          & 32.1          & 39.4          & 61.3          & 47.1          \cr
\multirow{15}{*}{}                           & \multicolumn{1}{|c|}{HEA}              & 12.7          & 36.9          & -             & 26.2          & 58.1          & -             & 41.7          & 78.6          & -             & 10.5          & 31.3          & -             & 26.5          & 58.1          & -             & 43.3          & 80.1          & -             \cr
\multirow{15}{*}{}                          & \multicolumn{1}{|c|}{ACK}         & 30.4          & 54.9          & 38.7          & 56.0          & 73.6          & 62.4          & 68.2          & 87.4          & 75.2          & 28.9          & 49.6          & 36.0          & 53.5          & 69.9          & 59.3          & 67.6          & 86.4          & 74.4          \cr
\multirow{15}{*}{}                          & \multicolumn{1}{|c|}{MCLEA}            & 29.5          & 58.2          & 39.3          & 55.5          & 78.4          & 63.7          & 73.0          & 88.3          & 78.4          & 25.4          & 48.4          & 33.2          & 50.1          & 70.5          & 57.4          & 66.7          & 82.4          & 72.2          \cr
\multirow{15}{*}{}                           & \multicolumn{1}{|c|}{MEAformer}        & \underline{40.2}    & \underline{70.3}    & \underline{50.4}    & \underline{61.5}    & \underline{83.6}    & \underline{69.5}    & \underline{76.2}    & \underline{91.0}    & \underline{81.8}    & \underline{32.4}    & \underline{59.0}    & \underline{41.4}    & \underline{55.8}    & \textbf{77.9} & \underline{63.8}    & \underline{70.3}    & \underline{86.9}    & \underline{76.5}    \cr \midrule
\multirow{2}{*}{\rotatebox{90}{Ours}}                         & \multicolumn{1}{|c|}{\textbf{DESAlign}}    & \textbf{49.7} & \textbf{75.0} & \textbf{58.6} & \textbf{65.6} & \textbf{85.3} & \textbf{72.8} & \textbf{80.5} & \textbf{92.6} & \textbf{85.0} & \textbf{41.0} & \textbf{66.0} & \textbf{49.5} & \textbf{57.3} & \underline{76.3}    & \textbf{64.2} & \textbf{72.8} & \textbf{87.7} & \textbf{78.2} \cr
\multirow{2}{*}{}
& \multicolumn{1}{|c|}{\textbf{Improv.}} & \textbf{9.5}  & \textbf{4.7}  & \textbf{8.2}  & \textbf{4.1}  & \textbf{1.7}  & \textbf{3.3}  & \textbf{4.3}  & \textbf{1.6}  & \textbf{3.2}  & \textbf{8.6}  & \textbf{7.0}  & \textbf{8.1}  & \textbf{1.5}  & -1.6          & \textbf{0.4}  & \textbf{2.5}  & \textbf{0.8}  & \textbf{1.7}  \cr 
\midrule
\multirow{4}{*}{\rotatebox{90}{Iterative}} & \multicolumn{1}{|c|}{EVA}              & 23.1          & 44.8          & 31.8          & 36.4          & 60.6          & 44.9          & 49.1          & 71.1          & 57.3          & 18.8          & 40.3          & 26.0          & 32.5          & 56.0          & 40.4          & 49.3          & 69.5          & 57.2          \\
\multirow{4}{*}{}                           & \multicolumn{1}{|c|}{MSNEA}            & 14.9          & 39.2          & 23.2          & 35.8          & 65.6          & 45.9          & 56.5          & 81.0          & 65.1          & 13.8          & 34.6          & 21.0          & 37.6          & 64.6          & 47.2          & 59.3          & 80.6          & 66.8          \\
\multirow{4}{*}{}                           & \multicolumn{1}{|c|}{MCLEA}            & 39.5          & 65.6          & 48.7          & 62.0          & 83.2          & 69.6          & 74.1          & 90.0          & 80.2          & 32.2          & 54.6          & 40.0          & 56.3          & 75.1          & 63.1          & 68.1          & 83.7          & 73.7          \\
\multirow{4}{*}{}                           & \multicolumn{1}{|c|}{MEAformer}        & \underline{57.1}    & \underline{81.2}    & \underline{65.6}    & \underline{68.5}    & \underline{87.4}    & \underline{75.2}    & \underline{78.4}    & \underline{92.1}    & \underline{83.5}    & \underline{44.4}    & \underline{69.2}    & \underline{52.9}    & \underline{60.8}    & \textbf{80.5} & \underline{67.8}    & \underline{71.8}    & \underline{87.9}    & \underline{77.9}    \\ \midrule 
\multirow{2}{*}{\rotatebox{90}{Ours}}                           & \multicolumn{1}{|c|}{\textbf{DESAlign}}    & \textbf{58.0} & \textbf{81.5} & \textbf{66.5} & \textbf{71.8} & \textbf{88.9} & \textbf{78.2} & \textbf{82.3} & \textbf{93.0} & \textbf{86.0} & \textbf{44.8} & \textbf{71.3} & \textbf{54.1} & \textbf{61.2} & \underline{79.9}    & \textbf{68.0} & \textbf{73.4} & \textbf{88.7} & \textbf{79.0} \\
\multirow{2}{*}{}                            & \multicolumn{1}{|c|}{\textbf{Improv.}} & \textbf{0.9}  & \textbf{0.3}  & \textbf{0.9}  & \textbf{3.3}  & \textbf{1.5}  & \textbf{3.0}  & \textbf{3.9}  & \textbf{0.9}  & \textbf{2.5}  & \textbf{0.4}  & \textbf{2.1}  & \textbf{1.2}  & \textbf{0.4}  & -0.6          & \textbf{0.2}  & \textbf{1.6}  & \textbf{0.8}  & \textbf{1.1}  \\ \bottomrule
\end{tabular}}
\end{table*}

\begin{table}[t!]
\caption{Main results of bilingual datasets.}
\vspace{-0.2cm}
\label{bilingual results}
\setlength{\tabcolsep}{3pt}
\centering
\resizebox{\linewidth}{!}{
\begin{tabular}{cccccccccc}
\toprule
Datasets                              & \multicolumn{3}{|c}{DBP15K$_{FR-EN}$}            & \multicolumn{3}{c}{DBP15K$_{JA-EN}$}           & \multicolumn{3}{c}{DBP15K$_{ZH-EN}$} \cr
\cmidrule(lr){1-1}\cmidrule(lr){2-4}\cmidrule(lr){5-7}\cmidrule(lr){8-10}
\multirow{2}{*}{Models}               & \multicolumn{9}{|c}{Non-iterative}                                                                                                             \\ \cmidrule{2-10} 
& \multicolumn{1}{|c}{H@1}           & H@10          & MRR           & H@1           & H@10          & MRR           & H@1           & H@10          & MRR           \\ 
\midrule
\multicolumn{1}{c|}{MUGCN}            & 49.5          & 87.0          & 62.1          & 50.1          & 85.7          & 62.1          & 49.4          & 84.4          & 66.1          \\
\multicolumn{1}{c|}{ALiNet}           & 55.2          & 85.2          & 65.7          & 54.9          & 83.1          & 64.5          & 53.9          & 82.6          & 62.8          \\
\multicolumn{1}{c|}{EVA}              & 68.3          & 92.3          & 76.7          & 67.3          & 90.8          & 75.7          & 68.0          & 91.0          & 76.2          \\
\multicolumn{1}{c|}{MSNEA}            & 54.3          & 80.1          & 63.0          & 53.5          & 77.5          & 61.7          & 60.1          & 83.0          & 68.4          \\
\multicolumn{1}{c|}{MCLEA}            & 71.1          & 90.9          & 78.2          & 71.5          & 90.9          & 78.5          & 71.5          & 92.3          & 78.8          \\
\multicolumn{1}{c|}{MEAformer}        & \underline{77.0}    & \underline{96.1}    & \underline{84.1}    & \underline{76.4}    & \underline{95.9}    & \underline{83.4}    & \underline{77.1}    & \underline{95.1}    & \underline{83.5}    \\ \midrule
\multicolumn{1}{c|}{\textbf{DESAlign}}    & \textbf{82.6} & \textbf{97.2} & \textbf{88.5} & \textbf{81.1} & \textbf{96.3} & \textbf{86.9} & \textbf{81.0} & \textbf{95.7} & \textbf{86.5} \\
\multicolumn{1}{c|}{\textbf{Improv.}} & \textbf{5.6}  & \textbf{1.1}  & \textbf{4.4}  & \textbf{4.7}  & \textbf{0.4}  & \textbf{3.5}  & \textbf{3.9}  & \textbf{0.6}  & \textbf{3.0}  \\ \midrule
\multicolumn{1}{c|}{Models} & \multicolumn{9}{c}{Iterative}\cr 
\midrule
\multicolumn{1}{c|}{BootEA}           & 65.3          & 87.4          & 73.1          & 62.2          & 85.4          & 70.1          & 62.9          & 84.7          & 70.3          \\
\multicolumn{1}{c|}{NAEA}             & 67.3          & 89.4          & 75.2          & 64.1          & 87.3          & 71.8          & 65.0          & 86.7          & 72.0          \\
\multicolumn{1}{c|}{EVA}              & 76.7          & 93.9          & 83.1          & 74.1          & 91.8          & 80.5          & 74.6          & 91.0          & 80.7          \\
\multicolumn{1}{c|}{MSNEA}            & 58.4          & 84.1          & 67.1          & 57.2          & 83.2          & 66.0          & 64.3          & 86.5          & 71.9          \\
\multicolumn{1}{c|}{MCLEA}            & 81.1          & 95.4          & 86.5          & 80.6          & 95.3          & 86.1          & 81.1          & 95.4          & 86.5          \\
\multicolumn{1}{c|}{MEAformer}        & \underline{84.5}    & \underline{97.6}    & \underline{89.4}    & \underline{84.2}    & \underline{97.4}    & \underline{89.2}    & \underline{84.7}    & \textbf{97.0}    & \underline{89.2}    \\ \midrule
\multicolumn{1}{c|}{\textbf{DESAlign}}    & \textbf{88.2} & \textbf{98.2} & \textbf{92.4} & \textbf{87.1} & \textbf{97.4} & \textbf{91.3} & \textbf{86.8} & \underline{96.9} & \textbf{90.9} \\
\multicolumn{1}{c|}{\textbf{Improv.}} & \textbf{3.7}  & \textbf{0.6}  & \textbf{3.0}  & \textbf{2.9}  & \textbf{0.0}  & \textbf{2.1}  & \textbf{2.1}  & -0.1 & \textbf{1.7}  \\ 
\bottomrule
\end{tabular}}
\end{table}

\begin{figure*}[t]
\centering
\includegraphics[width = \linewidth]{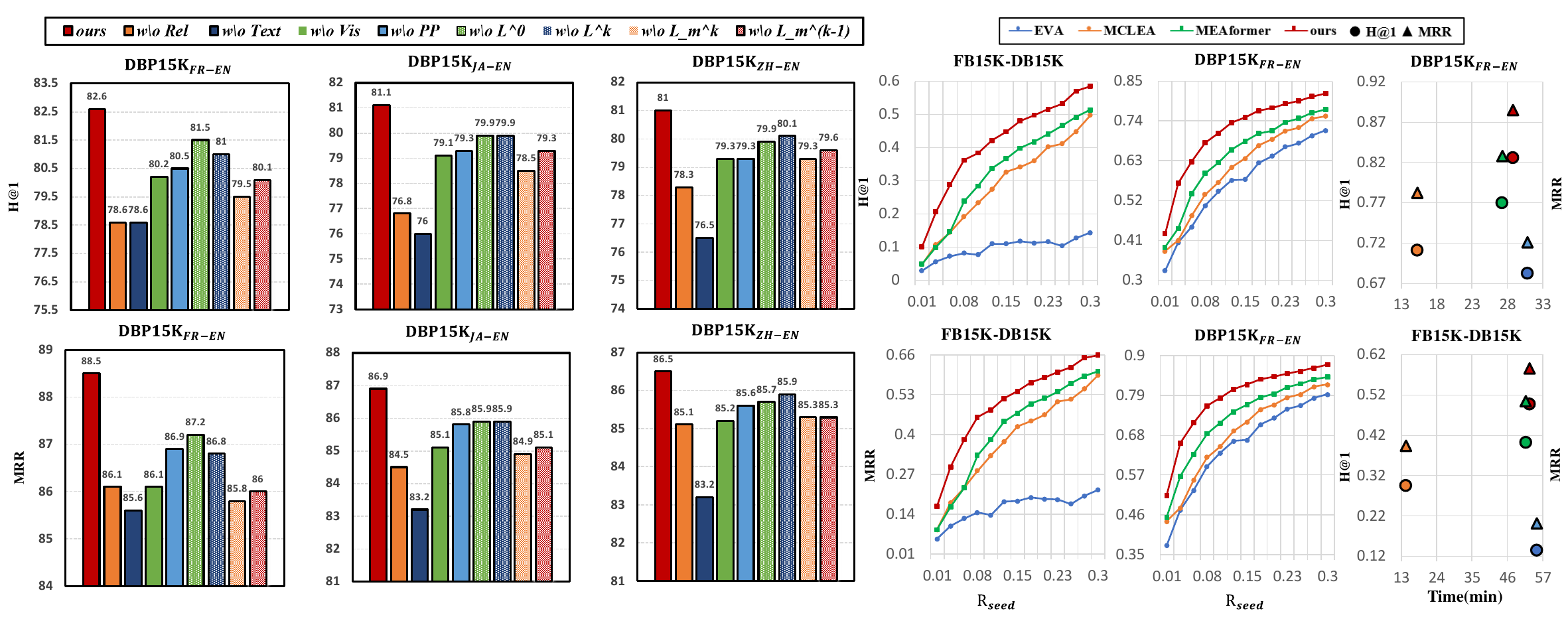}
\caption{The ablation study (left) and different supervised setting (right) for DESAlign.} 
\label{Ablation}
\end{figure*}

Our primary experiment is dedicated to assessing the performance of different methods under varying proportions of missing modalities, specifically $R_{img}$ (the ratio of images) and $R_{tex}$ (the ratio of text attributes). To regulate the level of semantic inconsistency, we select six representative proportions: $R_{img}, R_{tex} \in \{5\%, 20\%, 30\%, 40\%, 50\%, 60\%\}$ to simulate the degrees of missing modalities encountered in real-world scenarios. We thoroughly evaluate the robustness of the models and present the results while analyzing the performance from distinct perspectives.

\noindent
\textbf{(1) Robustness for Text Modality.} DESAlign demonstrates marked robustness in handling missing text modalities (Table \ref{Varying Text}), validating our theoretical propositions. Contrary to other methods that experience significant performance drops with incomplete text attributes—for instance, Meaformer's MRR decreases to 44.3\% (Table \ref{monolingual result} shows the original 50.4\%) on the FB-DB datasets—DESAlign consistently performs around 56.5\%, demonstrating enhanced stability and accuracy. It also consistently improves across different $R_{img}$ levels, with H@1 increases ranging from 9.8\% to 11.9\%.

\noindent
\textbf{(2) Robustness for Visual Modality.} As presented in Table \ref{Varying Image} for the DBP15K datasets, DESAlign demonstrates an average improvement on H@1 of 8.8\% ($R_{img} = 5\%$), 9.3\% ($R_{img} = 20\%$), 8.8\% ($R_{img} = 30\%$), 8.7\% ($R_{img} = 40\%$), 7.6\% ($R_{img} = 50\%$), and 6.3\% ($R_{img} = 60\%$). The most significant improvement is observed when $R_{img}$ is under 50\%. In contrast to the significant sensitivity of other methods, DESlign is more robust with varying ratios of missing images. Although the improvement is slightly lower compared to text modality, the overall advantage remains consistent, aligning with our motivations as concluded in Table \ref{Varying Text}.

\noindent
\textbf{(3) Robustness  Comparison.} Additionally, DESAlign optimally utilizes text and image modality, with alignment accuracy continuing to increase as the ratio of text attributes and images rises. This contrasts with existing models that exhibit performance oscillations or declines as the modality missing ratio increases. This adverse effect peaks within a specific ratio range but gradually recovers and gains benefits as the ratio rises to a certain level. Although counterintuitive, this observation is logical—introducing text attributes and images for half of the entities may introduce noise for the remaining half, necessitating a necessary trade-off.

\subsection{(Q2) Main Results on Real-world Graphs}

Our method undergoes rigorous evaluation on real-world multi-modal datasets, encompassing standard multi-modal DBP15K ($_{FR-EN}$, $_{JA-EN}$, $_{ZH-EN}$), FB15K-DB15K, and FB15K-YAGO15K. Monolingual and bilingual results from the experiments are presented in Table \ref{monolingual result} and Table \ref{bilingual results}, respectively. Beyond achieving superior performance on 18 benchmarks, including both basic and iterative strategies, we have the following other observations:

(1) DESAlign consistently surpasses the performance of state-of-the-art models across both bilingual and monolingual datasets. Particularly noteworthy are the remarkable improvements of 9.5\% and 8.6\% in the H@1 metric on the FB15K-DB15K ($R_{seed}=20\%$) and FB15K-YAGO15K ($R_{seed}=20\%$) monolingual datasets, respectively. Furthermore, DESAlign exhibits a 5.6\% improvement in the H@1 metric on the DBP15K$_{FR-EN}$ bilingual dataset. These results unveil that: (\romannumeral1) the semantic inconsistency issue is pervasive in real-world datasets; (\romannumeral2) the proposed semantic learning and semantic propagation strategies are highly effective for addressing this issue; and (\romannumeral3) the DESAlign model design demonstrates increased robustness in the face of severe semantic inconsistency issues. In terms of MRR, DESAlign showcases improvements ranging from 1.7\% to 8.2\% on monolingual datasets and 3.0\% to 4.4\% on bilingual datasets. These findings underscore DESAlign's versatility and reliability across diverse datasets, establishing it as a robust solution for MMEA.

(2)  The iterative strategy notably enhances multi-modal entity alignment by progressively incorporating high-quality alignments into the seed to refine the model. Specifically, DESAlign demonstrates significant improvements of 3.9\% accuracy on FB15K-DB15K ($R_{seed}=80\%$) datasets. Both H@1 and MRR metrics exhibit improvements across various datasets. Remarkably, even without the iterative strategy, DESAlign outperforms most iterative models on both bilingual and monolingual datasets. Moreover, as the ratio of seed alignments increases, the H@1 and MRR metrics of our method with an iterative strategy show gradual improvement. This is attributed to the iterative strategy leveraging previously predicted alignment results to enhance the training process. The increase in the number and quality of seed alignments contributes to greater improvements in the subsequent training processes. This exceptional performance underscores the effectiveness of our method in identifying high-quality aligned entity pairs for iterative training.

\subsection{(Q3, Q4) Analysis of DESAlign}

\noindent
\textbf{Ablation Study.}
We conduct an ablation study, evaluating various stripped-down versions of DESAlign presented in Figure \ref{Ablation} (left) to elucidate the H@1/MRR gains attributed to different components. We have the following key observations:

(1) Notably, the removal of content from any modality leads to a discernible performance decline, with the text attribute exhibiting the most significant impact. This emphasizes the importance of each modality in contributing to the overall alignment accuracy. Interestingly, the degradation effects of text attributes and relations under ablation appear similar, suggesting potential information complementarity between the text attribute and relation modalities. For instance, in DBP15K datasets, terms like \textit{clubs} and \textit{national team} may appear in both the $\mathcal{A}$ and $\mathcal{R}$ sets of the entity \textit{Mario Gomez}, indicating partial redundancy.

(2) Our analysis of the impact of each training objective, as detailed in Proposition \ref{MMSLearning}, reveals that the absence of any objective results in varying degrees of performance degradation. The objectives $\mathbf{X}^{(0)}$ and $\mathbf{X}^{(k-1)}$ play a role in constraining the bound of Dirichlet energy, with $\mathcal{L}{task}^{(0)}$ and $\mathcal{L}{m}^{(k-1)}$ enhancing the final semantic embedding quality. It is worth nothing that their influence is comparatively less significant than that of $\mathcal{L}{task}^{(k)}$ and $\mathcal{L}{m}^{(k)}$ in practice.

(3) While semantic learning mitigates the over-smoothing issue caused by semantic inconsistency in the training process, addressing the root cause of the lack of modal features for many entities is crucial. Semantic propagation proves effective in this regard, as evidenced by the substantial drop in accuracy when removing semantic propagation (w/o PP), impacting almost the same as removing an entire modality. This reveals that the semantic consistency issue truly has an impact on the accuracy of MMEA, and semantic propagation can precisely interpolate the missing modal feature.

\noindent
\textbf{Weakly Supervised Setting.}
To assess the method's stability in diverse environments, particularly in the weakly supervised setting, we evaluate DESAlign on two datasets with seed alignment ratios ($R_{seed}$) ranging from 0.01 to 0.30 in monolingual FB15K-DB15K and bilingual DBP15K$_{FR-EN}$. Figure \ref{Ablation} (right) illustrates a consistent performance gap between DESAlign and three other models as the ratio increases. This trend aligns with the conclusions drawn for three splits of FBDB15K and FBYG15K in normal supervised settings, demonstrating that DESAlign consistently outperforms the baseline with a significant performance gap, irrespective of the normal or weakly supervised setting. Notably, DESAlign achieves outstanding results, such as 42.8\% on Hits@1 in DBP15K$_{FR-EN}$ with only 1\% seed alignments and 28.8\% on Hits@1 in FB15K-DB15K with 5\% seed alignments, surpassing the prominent model MEAformer, which achieves 14.6\% in a normal supervised setting. In a weakly supervised environment with only 1\% alignment seeds, DESAlign even outperforms EVA under normal supervised settings, showcasing its potential for few-shot MMEA. This aspect will be further explored in our future works.

\subsection{Efficiency Analysis}
Our efficiency analysis of DESAlign alongside four MMEA methods reveals that DESAlign enhances performance with only a slight increase in computational time (Table \ref{Ablation}). This efficiency suggests DESAlign can easily replace existing models with minimal additional resource demands, marking efficiency improvement as an area for future exploration.

\textbf{Semantic Propagation}, a key component of DESAlign, offers a fast enhancement to alignments. The bulk of DESAlign's computational effort is dedicated to \textit{multi-modal semantic learning}, comparable to the MEAformer's semantic encoding process. Our \textit{semantic propagation (sp)} component requires merely 7 and 9 seconds on DBP15K and FB-DB datasets, respectively, with the main computational load stemming from sparse-to-dense matrix multiplication. This process has a complexity of $O(|\mathcal{E}|)$ (or $O(|\mathcal{E}|d)$ for vector features), which is linear in the number of entities, making it particularly efficient for large-scale graphs. Given its non-reliance on learning, \textit{sp} can be executed as a pre-processing step, optimizing its performance on CPU and accommodating large graphs beyond GPU memory limits. Furthermore, it seamlessly integrates as a plugin for enhancing other MMEA models.

\begin{figure}[t]
\centering
\includegraphics[width = \linewidth]{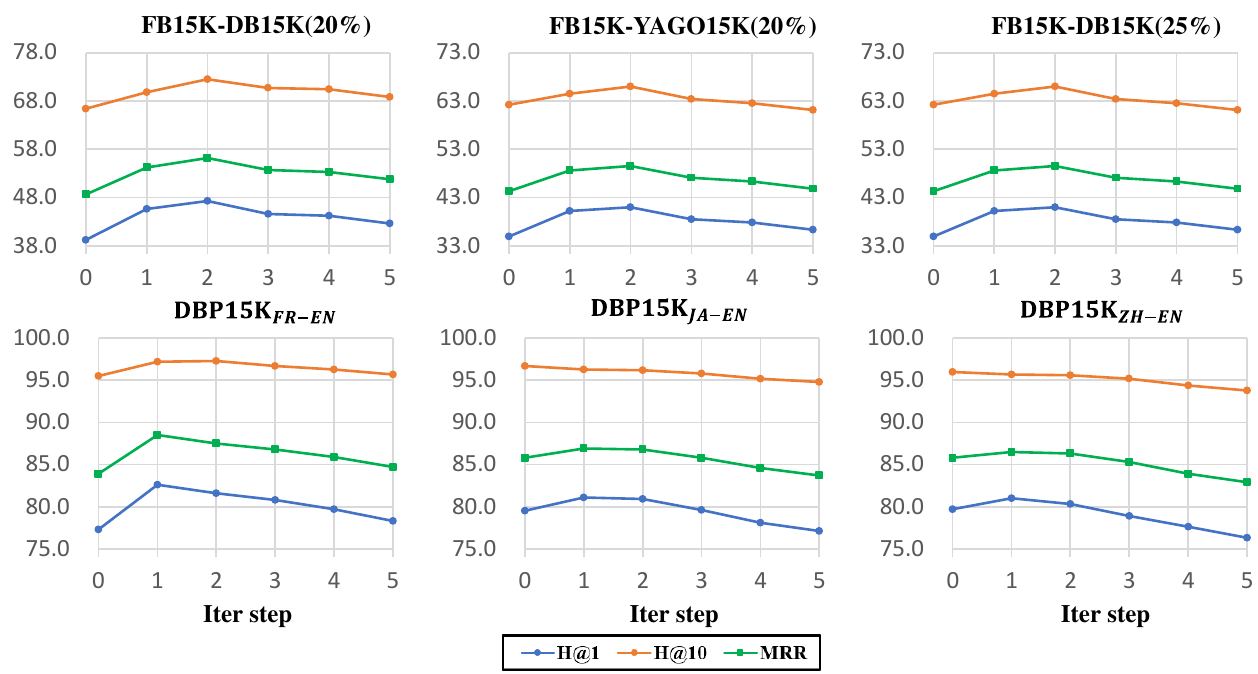}
\caption{The impact of number iterations in semantic propagation.} 
\label{propagationanalysis}
\end{figure}

\vspace{-0.2cm}
\subsection{Analysis of Semantic Propagation.}
We also study the impact of the number of iterations, $n_p$, we take in our semantic propagation. In Figure.\ref{propagationanalysis} we show the MMEA accuracy on both bilingual and monolingual datasets for different values of the number of iterations $n_p$. As we discussed in the Section.\ref{Semantic Propagation}, more propagations are equivalent to more gradient flow iterations, leading to more neighborhood semantic features used to estimations of the missing semantic feature. In practice, we make the semantic consistent features $\mathbf{x}_c$ join in the propagation process to simplify the application. It means too many processes will also import irrelevant semantic information from other features to the semantic consistent features, which misleads the alignments. We see that semantic propagation has different performances on various types of datasets. The number of propagation set as $n_p=2 $ gets the best performance for monolingual datasets, especially the $n_p=3$ iterations is the best setting for FB15K-DB15K($R_{seed} = 25\%$), and the $n_p=1$ will make best for bilingual DBP15K datasets. The difference in these effects is due to the structural and semantic differences inherent in bilingual KGs and monolingual KGs. Due to the greater heterogeneity in structure and semantics of bilingual KGs, collecting more semantic information from neighbors will also lead to the introduction of higher noise. But monolingual KGs utilize more semantic information with more propagation processes, while also ensuring that a smaller amount of noise is introduced. In addition, we see that when there are too many propagation processes, noise will have a higher impact, and the original semantic consistent features will also be misled to inconsistent.

\section{Related Work}
\label{relatedwork}
\textbf{Entity Alignment (EA)} approaches traditionally leverage structural features of KGs. Early logic-based methods, including logical reasoning and lexical matching, have been explored \cite{jimenez2011logmap, suchanek2011paris, sun2022revisiting, chen2017multilingual, zeng2020collective}. Recently, embedding-based methods gained popularity, with translation-based techniques like TransE \cite{bordes2013translating} and graph neural network (GNN)-based models \cite{tam2020entity}. The latter involves aggregating attributes \cite{cao2019multi, liu2020exploring}, relations \cite{chen2018co, wang2018cross}, and neighbor features \cite{liu2022guiding, liu2022selfkg, tang2023multi}. While attribute-enhanced techniques prove effective \cite{AttrE, COTSAE, liu2020exploring, tang2023cross}, existing methods ignore semantic inconsistency arising from diverse attributes.

\textbf{Multi-modal Knowledge Learning} focuses on leveraging diverse modalities for downstream applications, such as emotion recognition \cite{zhang2020emotion, jiang2021skeleton} and cross-modal retrieval \cite{zhen2019deep, chun2021probabilistic}. In the context of MMEA, existing studies \cite{UMAEA, meaformer, mclea} address modality discrepancies and missing features, particularly in the visual domain. However, they lack a holistic treatment of semantic consistency issues arising from diverse modalities. In contrast, we introduce a generalizable theoretical principle for guiding multi-modal knowledge graph learning and propose multi-modal semantic learning, offering optimal training under conditions of missing and noisy modalities while providing a comprehensive benchmark for evaluation.

\textbf{Multi-modal Entity Alignment} typically involves integrating visual and textual modalities to enhance KG-based entity alignment \cite{m6, b1}. Previous works, such as PoE \cite{liu2019mmkg}, represent entities as single vectors, concatenating features from multiple modalities. HEA \cite{m2} combines attribute and entity representations in hyperbolic space, utilizing aggregated embeddings for alignment predictions. Methods like MCLEA \cite{mclea} enhance intra-modal learning with contrastive methods, while MEAformer \cite{meaformer} improves modality fusion through hybrid frameworks. Despite their contributions, these approaches often overlook or inadequately address semantic inconsistencies arising from differences in the number or absence of specific modal attributes. Existing methods focus on creating high-quality distributions to compensate for missing attributes, introducing modal noise, and potentially losing valuable semantic information. In this work, we introduce semantic propagation, leveraging known features to restore missing semantic information across modalities.

\section{Conclusion}
\label{concluision}
This study presents DESAlign, a novel framework for multi-modal entity alignment that ensures semantic consistency across knowledge graphs. By proposing a Dirichlet energy-based principle, DESAlign successfully resolves semantic inconsistencies, combats over-smoothing when encoding multi-modal knowledge graphs, and interpolates missing semantics. Evaluated across a variety of public datasets under both standard and weakly supervised scenarios, DESAlign proves to be robust and superior, particularly in handling diverse modalities and extensive missing attributes. 

Moreover, our findings highlight potential areas for enhancing DESAlign's efficiency and the critical role of visual modality in enhancing alignment robustness.  Future work will explore optimizing these aspects, particularly focusing on developing efficient strategies and leveraging detailed visual content for improved multi-modal entity alignment.

\section*{Acknowledgment}

This work was supported in part by the National Natural Science Foundation of China under Grant 62201072, Grant U23B2001, Grant 62171057, Grant 62101064, Grant 62001054, and Grant 62071067; in part by the Ministry of Education and China Mobile Joint Fund under Grant MCM20200202 and Grant MCM20180101; in part by the BUPT-China Mobile Research Institute Joint Innovation Center.

\newpage
\bibliographystyle{IEEEtran}
\bibliography{reference.bib}

\end{document}